\newtheorem{lemma}{Lemma}{}
\newtheorem{proposition}{Proposition}{}
{}
\DeclareMathOperator*{\minimize}{minimize}
\begin{document}
\title{Integrated Sensing and Communication with Massive MIMO: A Unified Tensor Approach for Channel and Target Parameter Estimation}

\author{\IEEEauthorblockN{
		Ruoyu Zhang,
		Lei Cheng,
		Shuai Wang,
		Yi Lou,
		Yulong Gao, Wen Wu, \\
		and Derrick Wing Kwan Ng,~\IEEEmembership{Fellow,~IEEE} }
	\thanks{
		R. Zhang and W. Wu are with the Key Laboratory of Near-Range RF Sensing ICs \& Microsystems (NJUST), Ministry of Education, School of Electronic and Optical Engineering, Nanjing University of Science and Technology, Nanjing 210094, China (e-mail: ryzhang19@njust.edu.cn; wuwen@njust.edu.cn).
		L. Cheng is with the College of Information Science and Electronic Engineering, Zhejiang University, Hangzhou 310027, China (e-mail: lei\_cheng@zju.edu.cn).
		S. Wang is with the Shenzhen Institute of Advanced Technology, Chinese Academy of Sciences, Shenzhen 518055, China (e-mail: s.wang@siat.ac.cn).
		Y. Lou is with the College of Information Science and Engineering, Harbin Institute of Technology Weihai, Weihai 264209, China (e-mail: louyi@ieee.org).
		Y. Gao is with the School of Electronic and Information Engineering, Harbin Institute of Technology, Harbin 150001, China (e-mail: ylgao@hit.edu.cn).
		D. W. K. Ng is with the School of Electrical Engineering and Telecommunications, University of New South Wales, Sydney, NSW 2052, Australia (e-mail: w.k.ng@unsw.edu.au). }
\vspace{-2em}
}


\maketitle

\IEEEpeerreviewmaketitle

\begin{abstract}
Benefitting from the vast spatial degrees of freedom, the amalgamation of integrated sensing and communication (ISAC) and massive multiple-input multiple-output (MIMO) is expected to simultaneously improve spectral and energy efficiencies as well as the sensing capability. 
However, a large number of antennas deployed in massive MIMO-ISAC raises critical challenges in acquiring both accurate channel state information and target parameter information.
To overcome these two challenges with a unified framework, we first analyze their underlying system models and then propose a novel tensor-based approach that addresses both the channel estimation and target sensing problems.  
Specifically, by parameterizing the high-dimensional communication channel exploiting a small number of physical parameters, we associate the channel state information with the sensing parameters of targets in terms of angular, delay, and Doppler dimensions.
Then, we propose a shared training pattern adopting the same time-frequency resources such that both the channel estimation and target parameter estimation can be formulated as a canonical polyadic decomposition problem with a similar mathematical expression. 
On this basis, we first investigate the uniqueness condition of the tensor factorization and the maximum number of resolvable targets by utilizing the specific Vandermonde structure. 
Then, we develop a unified tensor-based algorithm to estimate the parameters including angles, time delays, Doppler shifts, and reflection/path coefficients of the targets/channels. 
In addition, we propose a segment-based shared training pattern to facilitate the channel and target parameter estimation for the case with significant beam squint effects. 
Simulation results verify our theoretical analysis and the superiority of the proposed unified algorithms in terms of estimation accuracy, sensing resolution, and training overhead reduction.
\end{abstract}

\begin{IEEEkeywords}
	Integrated sensing and communication, massive MIMO, channel estimation, target parameter estimation, tensor decomposition.
\end{IEEEkeywords}

\section{Introduction}

Integrated sensing and communication (ISAC), which aims at unifying two functionalities into a single system to improve spectral and energy efficiencies, has been envisioned as a promising component for realizing the sixth-generation (6G) wireless communications \cite{Cui2021Integrating,Liu2022Integrated,Zhang2022Enabling}.  
Meanwhile, massive multiple-input multiple-output (MIMO), which advocates a large antenna array at the base station (BS), has become one of the most important enabling technologies for current and future wireless cellular networks \cite{larsson2014massive,Busari2018Millimeter}. 
Profiting from the vast spatial degrees of freedom and the endogenous sensing capability, 
the integration of massive MIMO and ISAC (MIMO-ISAC) is expected to simultaneously guarantee high-quality wireless communication as well as high-resolution and robust sensing,  
thus enabling the rapid development of various emerging applications including autonomous driving in intelligent transportation and unmanned aerial vehicle networks in smart cities \cite{Ali2020Leveraging,Zhang2021Perceptive}, etc.

Due to the different ways of information processing in sensing and communication systems, 
existing works for ISAC attempt to integrate both sensing and communication functionalities such that the information extraction from observations and the information delivery to receivers can be simultaneously realized \cite{Sturm2011Waveform,zhang2019multibeam,Zhang2022Integrated,Liyanaarachchi2021Optimized}.
To achieve this goal, 
in \cite{zhang2019multibeam}, the authors proposed a multibeam framework to simultaneously generate a fixed subbeam for communication and a varying scanning subbeam for sensing.
Alternatively, the orthogonal frequency-division multiplexing (OFDM) signal, a commonly adopted communication waveform, has been employed for target sensing \cite{Liyanaarachchi2021Optimized}. 
On the other hand, recent ISAC works developed a joint design methodology that can strike a scalable tradeoff between the dual functionalities via exploiting the extra flexibility and the degrees of freedom offered by MIMO techniques \cite{Liu2018Toward,liuxiang2020joint}. 
While the dual communication and sensing functionalities have been integrated into one platform to a certain extent, the aforementioned ISAC techniques generally assume a-priori of accurate communication channel state information or partially known target parameter information.  

Conventionally, the target parameter estimation problem has been independently studied in radar sensing systems. 
As one of key issues, direction-of-arrival (DOA) estimation problems were deeply studied over the past decades and various celebrated techniques have been proposed to provide super-resolution and high-accuracy estimation performance, such as multiple signal classification (MUSIC) \cite{Stoica1989MUSIC}, estimating signal parameters via rotational invariance techniques (ESPRIT) \cite{Roy1989ESPRIT}, and compressive sensing-based methods \cite{yang2013off,Cheng2015Subspace,Zhang2022DOA}. 
These existing techniques can be readily applied and extended to single-antenna ISAC systems \cite{liu2020super,xie2021performance,Zhang2020Joint} as well as MIMO-ISAC systems \cite{keskin2021mimo,rahman2019framework}. 
For example, an auto-paired range and velocity estimation method was proposed in \cite{liu2020super}, which makes use of the translational invariance structure of OFDM signals in both the pulse and frequency domains. 
Also, in \cite{xie2021performance}, a two-dimensional MUSIC algorithm along with the properly designed smoothing window was proposed to reduce the computational burden.
To further improve the estimation accuracy, the authors in \cite{Zhang2020Joint} considered the intrapulse and intersubcarrier Doppler effects and developed a maximum likelihood-based estimation scheme.
Subsequently, the authors in \cite{keskin2021mimo} proposed to estimate the delay, Doppler, and angle parameters of multiple targets for OFDM-based MIMO-ISAC systems. 
Besides, the authors in \cite{rahman2019framework} proposed to integrate the radar sensing function into the mobile communication network and developed the direct/indirect schemes for azimuth, range, and velocity estimation based on the one-dimensional compressive sensing algorithm. 
Although the aformentioned works can effectively estimate the target parameters adopting OFDM signals, the communication transmission requires the availability of accurate channel state information. In general, obtaining such a piece of information would consume exceedingly large amount of training resources in massive MIMO-ISAC systems that deploy a large number of antennas at the BS.

In practice, accurate channel estimation is essential for achieving the communication capacity \cite{Myers2019Message,Zhang2020Downlink} as well as reaping the potential performance gains of massive MIMO-ISAC. 
Over the years, there have been various approaches exploiting the channel structures of massive MIMO communication systems for reducing the channel training overhead \cite{Rod2018Frequency,Liao2019Closed,Zhang2022MMV,qin2018time,Liu2020Uplink,Wang2018Spatial,Wang2019An,Wang2020ABlock}.
For instance, the authors in \cite{Liao2019Closed} proposed a closed-loop framework for OFDM-based massive MIMO systems and a multi-dimensional ESPRIT algorithm to estimate the wideband millimeter-wave channels. 
Also, in \cite{Liu2020Uplink}, an uplink-aided downlink channel estimation scheme was  developed to estimate the time-varying MIMO channels, where the parameters, such as Doppler frequency, angles of departure (AoDs), and time delays, are jointly estimated via a variational Bayesian framework. 
To capture the non-negligible propagation delay when employing large array aperture and wide bandwidths, i.e., the so-called beam squint effects, the authors in \cite{Wang2020ABlock} designed a compressive sensing-based channel estimation algorithm that exploits the angle-delay sparsity to jointly compute the off-grid angles of arrival (AoAs) of the MIMO channel. 
Recently, tensor-based signal processing is attractive to exploit the multidimensional characteristics of massive MIMO channels and have been successfully applied for channel estimation problems such that the parameters of each channel multipath can be accurately estimated \cite{zhou2017low,park2019spatial,qian2019algebraic,lin2020tensor,Zhang2022Tensor}.
In \cite{qian2019algebraic}, with the help of carefully designing the training sequences, a canonical polyadic decomposition (CPD)-based channel estimation scheme  with low computational complexity was proposed for frequency division duplex-based massive MIMO systems.
Also, the authors in \cite{lin2020tensor} exploited the successive beam training strategy and developed a structured CPD-based channel estimation scheme to cope with the dual-wideband effects in massive MIMO-OFDM systems.

It should be highlighted that the aforementioned parameterized channel estimation approaches exhibit some similarities to the estimation of target angles and other parameters in traditional MIMO radar sensing systems.
The sensing process essentially acquires the target parameters by transmitting a known probing signal, which is similar to that of the channel estimation process and the targets can thus be virtually regarded as an uncooperative communication user \cite{liu2020radar,liu2020joint}.  
Inspired by the insights above, the authors in \cite{liu2020joint} proposed to jointly search the target and estimate the communication channel for millimeter wave MIMO systems, where the pilot signals are omnidirectionally transmitted by a hybrid analog-digital structure to actively detect the targets.
As a further step, the authors in \cite{yuan2021integrated} leveraged the topology of vehicular networks to estimate the kinematic parameters of vehicles, e.g., locations and speeds, such that the downlink beamformers can be constructed according to the predicted parameters without the explicit need for performing channel estimation.
Subsequently, a MIMO radar-aided channel estimation scheme was proposed to reduce the training overhead and improve the channel estimation accuracy in vehicle-to-everything scenarios \cite{huang2021mimo}. 
In practical massive MIMO-ISAC scenarios, the movement of targets and users, as well as the large bandwidth required for high performance communication and sensing, can lead to a substantially increase of unknown channel and target parameters. 
Although the training overhead for channel estimation can be significantly reduced with the assistance of the sensing echo signals, the aforementioned works \cite{yuan2021integrated,huang2021mimo} are only applicable for acquiring line-of-sight channels.
Note that various state-of-the-art channel estimation approaches have designed for the multipath channel estimation in different propagation environments, such as quasi-static \cite{Rod2018Frequency,Liao2019Closed,Zhang2022MMV,zhou2017low,qian2019algebraic,lin2020tensor}, high-mobility scenarios \cite{qin2018time,Liu2020Uplink,park2019spatial,Zhang2022Tensor}, and wideband systems with beam squint effects \cite{Wang2018Spatial,Wang2019An,Wang2020ABlock}.
However, these approaches can only acquire part of channel parameters in the delay, Doppler, and angular dimensions for specific scenarios, while fail to simultaneously estimate the range, azimuth, and velocity of the target parameters via the same training signals.

In this paper, we investigate the channel and target parameter estimation problem under a unified framework for massive MIMO-ISAC systems. 
The main contributions of the paper are summarized as follows:

\begin{itemize}
	\item 
	We make the first attempt to characterize the target parameter information and channel state information in a unified framework from the perspective of sensing \& communication channels. The framework is suitable for time-varying channels, frequency selective channels, as well as the case of the beam squint effects. 
	In particular, the high-dimensional massive MIMO channel can be parameterized by a small number of physical parameters, e.g., multipath delay, Doppler shift, and AoAs/AoDs, which establish an intrinsic link to the range, velocity, and azimuth angles, of the sensing targets.

	\item 
	 We propose a shared training pattern that exploits the same time-frequency resources for channel training and target sensing such that both the channel estimation and target parameter estimation can be formulated as two structured tensor decomposition problems, respectively. 
	 We model the received echo signals as a third-order canonical polyadic tensor such that the intrinsic multi-dimensional structure of the high-dimensional communication and sensing channels can be captured from the perspective of angular, delay, and Doppler dimensions. 
	 Besides, we analyze the maximum number of resolvable targets based on the uniqueness condition of the formulated tensor decomposition problem.
	
	\item 
	By exploiting the Vandermonde structure of the factor matrix and the spatial smoothing method, we propose a unified channel and target parameter estimation algorithm, where the AoA, AoD, time delay, Doppler shift, and coefficients can be estimated in a separate manner. 
	In addition, we propose an iterative estimation method for addressing the complicated coupling between AoD and Doppler shift parameters, providing the improved angle and velocity estimation accuracy of the targets, as well as the enhanced channel estimation performance. 

	\item 
	For the case of significant beam squint effects, we further propose a segment-based training pattern to facilitate the tensor formulation of both the channel estimation and target parameter estimation. 
	In the newly formulated tensor decomposition problem, we develop an effective parameters estimation scheme that not only exploits the Vandermonde structure but also avoids the coupling issue between angle and Doppler parameters. Besides, we analyze the corresponding uniqueness condition, guaranteeing that the channel and target parameters can be uniquely acquired from the observed signals.

\end{itemize}

Notations:  
Vectors and matrices are denoted by lowercase and uppercase boldface letters, respectively. 
$\left( \cdot \right)^{T}$, 
$\left( \cdot \right)^{*}$, 
$\left( \cdot \right)^{H}$, 
and $\left( \cdot \right)^{\dagger}$
represent the transpose, conjugate, conjugate transpose, and Moore-Penrose pseudo-inverse operations, respectively.
$\mathbf{I}_{M}$ is the $M \times M$ identity matrix and $\mathrm{j} = \sqrt{-1}$ denotes an imaginary unit. 
$\left\| \cdot \right\|_{2}$ and $\left\| \cdot \right\|_{F}$ denote the $l_{2}$ norm and Frobenius norm, respectively. 
$\odot$, $\otimes$, and $\circ$ denote the Khatri-Rao product, Kronecker product, and outer product, respectively. 
The Kruskal-rank and the rank of $\mathbf{A}$ are denoted by $k_{\mathbf{A}}$ and $\mathrm{rank}(\mathbf{A})$, respectively.
$\mathcal{R}\{\cdot\}$ extracts the real part of a complex number, while $\mathcal{CN}(0, \sigma^2)$ denotes the zero mean circularly-symmetric complex Gaussian distribution with the variance of $\sigma^2$. 
$[\mathbf{a}]_{m}$ and $[\mathbf{A}]_{m,n}$ denote the $m$-th entry and $(m,n)$-th entry of $\mathbf{a}$ and $\mathbf{A}$, respectively. 
$\mathrm{diag}(\mathbf{a})$ denotes a diagonal matrix formed by $\mathbf{a}$ and  $\mathbf{A}^{(K_1,1)} = [\mathbf{A}^{(1)}]_{1:K_1,:}$ denotes the submatrix that extracts the first $K_1$ rows of $\mathbf{A}^{(1)}$.

\section{Massive MIMO-ISAC System Model}

\begin{figure}[!tp]
	\centering
	\includegraphics[width=3.5in]
	{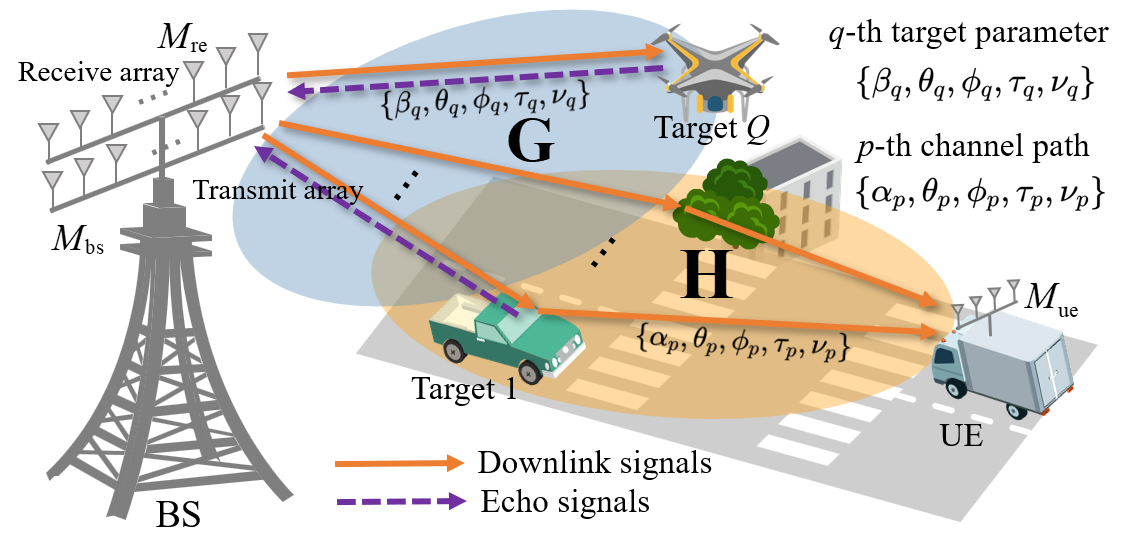}
	\caption{Massive MIMO-ISAC system model and the associated  communication/sensing channel models.}
	\label{Fig_SystemModel}
\end{figure}

As shown in Fig. \ref{Fig_SystemModel}, we consider a generic massive MIMO-ISAC system employing $M_{\mathrm{bs}}$ transmit antennas at the base station (BS). 
The BS transmits the shared wireless signals for serving an $M_{\mathrm{ue}}$-antenna communication user equipment (UE) and simultaneously sensing $Q$ surrounding targets.
The echo signals reflected by the targets are collected by spatially well-separated $M_{\mathrm{re}}$ receive antennas equipped at the BS, avoiding the self-interference caused by the transmitted signal leakage \cite{rahman2019framework}. 
Let $\tilde{s}(t) = e^{j2\pi f_c t} s(t)$ be the up-converted signal of the baseband signal $s(t)$ and the continuous-time passband signal for the transmission can be expressed as $\mathcal{R}\{ \tilde{s}(t) \}$. 
In this work, we focus on the downlink communication/sensing with a carrier frequency $f_{c}$,
where the UE and the BS individually perform the channel estimation and the target parameter estimation, respectively.\footnote{The extension to multi-user scenarios is straightforward for the downlink case, since the common downlink pilot signal is broadcasted from the BS and multiple UEs can perform channel estimation indepedently. For the uplink case, orthogonal time-frequency resources should be allocated to different users. For instance, denote $\mathcal{I}_{u}$ as the subcarrier index set for uplink channel training of the $u$-th UE, one can assign non-overlapping subcarrier index sets for different users, ensuring that channel estimation for multiple users can be carried out at the base station.}

\subsection{Communication Transmission Model}

Suppose a passband signal $\mathcal{R}\{ e^{j2\pi f_c t} s_{m_{\mathrm{bs}}}(t) \}$ of the $m_{\mathrm{bs}}$-th BS antenna arrives at the UE via $P$ multipaths, the received continuous-time signal at the $m_{\mathrm{ue}}$-th antenna can be given by \cite{Ertel1998Overview}
\begin{align}
	\tilde{r}_{m_{\mathrm{ue}}}\!(t) \!=\!\!\! \sum_{p=1}^P \!\mathcal{R}\!\! \left\{\! \tilde{\alpha}_{p} s_{m_{\mathrm{bs}}} \! (t \!-\!\! \tau_{p, m_{\mathrm{bs}}, m_{\mathrm{ue}}} \!) e^{j2\pi (f_c+\nu_{p}\!)\!\big(t-\tau_{p, m_{\mathrm{bs}}, m_{\mathrm{ue}}}\big)} \!\!\right\}\!\!,
\end{align}
where $\tilde{\alpha}_p \in \mathbb R$ is the passband path gain; 
$\nu_{p} \in [-\nu_{\max}/2,\nu_{\max}/2)$ denotes the Doppler shift;
$\nu_{\max}$ is termed as the (two-sided) Doppler spread of the channel; 
$\tau_{p, m_{\mathrm{bs}}, m_{\mathrm{ue}}}$ represents the time delay from the $m_{\mathrm{bs}}$-th BS antenna to the $m_{\mathrm{ue}}$-th antenna and is written as
\begin{align}
	\tau_{p, m_{\mathrm{bs}}, m_{\mathrm{ue}}} 
	\!= \tau_{p} 
	\!+\! \frac{(m_{\mathrm{bs}}\!-\!1) d \sin \phi_{p}}{c} 
	\!+\! \frac{(m_{\mathrm{ue}}\!-\!1) d \sin \theta_{p}}{c}, 
\end{align}
where $\tau_{p}$ is the free-space propagation delay between the first transmit and receive antenna, $\theta_{p}$ and $\phi_{p}$ denote the physical AoA and AoD of the $p$-th path, respectively; 
$d$ and $c$ denote the adjacent antenna spacing and the speed of light, respectively. 
Removing the carrier frequency $e^{j2\pi f_ct}$, the baseband received signal at the UE is
\begin{align}\label{yt}
	r_{m_{\mathrm{ue}}}(t) \!&=\! \sum_{p=1}^{P} \!\tilde{\alpha}_{p} s_{m_{\mathrm{bs}}}\!\big(t \!-\! \tau_{p, m_{\mathrm{bs}}, m_{\mathrm{ue}}}\big) e^{j2\pi \nu_{p} t} \notag\\
	& \quad\quad\quad\quad\quad\quad \times e^{- j2\pi (f_c+\nu_{p}) \tau_{p, m_{\mathrm{bs}}, m_{\mathrm{ue}}} }  \notag\\
	&=\!\! \sum_{p=1}^{P} \alpha_{p} s_{m_{\mathrm{bs}}}\!\big(t-\tau_{p, m_{\mathrm{bs}}, m_{\mathrm{ue}}} \big) e^{j2\pi \nu_{p}t}  \nonumber\\
	& \quad\quad\quad\quad\quad\quad \times
	e^{-j2\pi (f_c+\nu_{p}) (\Delta{\tau}_{m_{\mathrm{bs}}, p} + \Delta{\tau}_{m_{\mathrm{ue}}, p}) },
\end{align}
where $\alpha_p = \tilde{\alpha}_{p}e^{-j2\pi (f_c+\nu_{p})\tau_{p}}$ is the equivalent baseband path gain, $\Delta{\tau}_{m_{\mathrm{bs}}, p} \triangleq (m_{\mathrm{bs}}-1) d \sin \phi_{p}/c$ and $\Delta{\tau}_{m_{\mathrm{ue}}, p} \triangleq (m_{\mathrm{ue}}-1) d \sin \theta_{p}/c$ are the time delays introduced by the corresponding transmit and receive antennas, respectively. 
Then, the channel response between the $m_{\mathrm{ue}}$-th UE antenna and the $m_{\mathrm{bs}}$-th BS antenna can be represented as
\begin{align}
& [\mathbf{H}(t,\tau)]_{m_{\mathrm{ue}}, m_{\mathrm{bs}}} 
= \sum_{p=1}^{P}\alpha_{p}  e^{-j2\pi (f_c+\nu_{p})(\Delta{\tau}_{m_{\mathrm{bs}}, p} + \Delta{\tau}_{m_{\mathrm{ue}}, p}) }  \nonumber\\
&\quad\quad\quad\quad\quad\quad\quad\quad\quad \times e^{j2\pi \nu_{p}t}\delta(\tau - \tau_{p, m_{\mathrm{bs}}, m_{\mathrm{ue}}} ). \label{equ:H_t_tau}
\end{align}

By conducting the Fourier transform of \eqref{equ:H_t_tau} over $\tau$, we obtain the frequency-domain time-varying channel as
\begin{align}
	[\mathbf{H}(t,f)]_{m_{\mathrm{ue}}, m_{\mathrm{bs}}} 
	\!&\!\! = \sum_{p=1}^{P} \alpha_{p}  e^{-j2\pi (f + f_c + \nu_{p})  (\Delta{\tau}_{m_{\mathrm{bs}}, p} + \Delta{\tau}_{m_{\mathrm{ue}}, p}) }  \nonumber\\
	&\quad\quad\quad\quad\quad\quad \times   e^{j2\pi \nu_{p}t} e^{-j2\pi f \tau_p} \notag\\
	\!&\!\! = \sum_{p=1}^{P} \alpha_{p} e^{j2\pi \nu_{p}t} e^{-j2\pi f \tau_p} e^{-j2\pi (f + f_c) \Delta{\tau}_{m_{\mathrm{bs}}, p} }  \nonumber\\
	&\quad\quad\quad\quad\quad\quad  \times 
	e^{-j2\pi (f + f_c) \Delta{\tau}_{m_{\mathrm{ue}}, p} }, 
\end{align}
where the last equality holds due to the fact $\nu_{p} \ll f_c$.
Stacking the elements from all the BS and UE antennas, we can express the equivalent communication channel as 
\begin{align}
\mathbf{H}(t,f)=\sum_{p=1}^{P}\alpha_{p} \mathbf{a}_{\mathrm{ue}}(\theta_p,f)  \mathbf{a}_{\mathrm{bs}}^{T}(\phi_p,f)
e^{-j2\pi f \tau_p} e^{j2\pi \nu_{p}t}, \label{equ:CommuChannel_FreqTime}
\end{align}
where the wideband array steering vectors $\mathbf{a}_{\mathrm{bs}}(\phi_p,f)$ and $\mathbf{a}_{\mathrm{ue}}(\theta_p,f)$ are given by
\begin{align}\label{steering-bs}
	\mathbf{a}_{\mathrm{bs}}(\theta_p,f)&=\Big[1, e^{-j2\pi (1+\frac{f}{f_c})\frac{d\sin\phi_{p}}{\lambda_{c}}},\ldots, \nonumber\\
	&\quad\quad\quad\quad  
	e^{-j2\pi (1+\frac{f}{f_c}) \frac{(M_{\mathrm{bs}}-1)d\sin\phi_{p}}{\lambda_{c}}} \Big]^T,
\end{align}
\begin{align}\label{steering-ue}
	\mathbf{a}_{\mathrm{ue}}(\theta_p,f)&=\Big[1, e^{-j2\pi (1+\frac{f}{f_c})\frac{d\sin\theta_{p}}{\lambda_{c}}},\ldots,\nonumber\\
	&\quad\quad\quad\quad  
	e^{-j2\pi (1+\frac{f}{f_c}) \frac{(M_{\mathrm{ue}}-1)d\sin\theta_{p}}{\lambda_{c}}} \Big]^T,
\end{align}
respectively, where $\lambda_{c} = \frac{c}{f_c}$ denotes the carrier wavelength.

\subsection{Radar Sensing Model}

In this subsection, the radar sensing model in massive MIMO-ISAC systems is presented in detail. 
Suppose that the $q$-th point target exists at a distance $R_{q}$ with a radial velocity $V_{q}$, $\forall q \in \{1, \ldots, Q\}$. 
At the $m_{\mathrm{re}}$-th receive antenna of the BS, the echo signal bounced from the $Q$ targets can be given by \cite{Tigrek2012OFDM}
\begin{align} 
\tilde{y}_{m_{\mathrm{re}}}(t) 
&\! =\! \sum_{q=1}^{Q} \tilde{\beta}_{q} \tilde{s}_{m_{\mathrm{bs}}}\bigg(\!t -\! \frac{2(R_{q}-V_{q}t)}{c} -\! \Delta{\tau}_{m_{\mathrm{bs}},m_{\mathrm{re}},q}  \bigg),
\end{align}
where $\tilde{s}_{m_{\mathrm{bs}}}(t) = e^{j2\pi f_c t} s_{m_{\mathrm{bs}}}(t)$ is the transmit signal from the $m_{\mathrm{bs}}$-th BS antenna,
$\tilde{\beta}_{q}$ is the reflection coefficient of the $q$-th the target (proportional to the radar cross section),
$\Delta{\tau}_{m_{\mathrm{bs}},m_{\mathrm{re}},q} = \frac{(m_{\mathrm{bs}}-1) d \sin \phi_{q}}{c} +  \frac{(m_{\mathrm{re}}-1) d \sin \theta_{q}}{c}$,  
$\phi_{q}$ and $\theta_{q}$ denote the azimuth angle relative to the transmit and receive array of the $q$-th target, respectively.\footnote{For the monostatic radar where the transmit array and receive array are colocated, the azimuth angle satisfies $\phi_{q} = \theta_{q}$.} 
At the receiver of the BS, the signal after down-converting (removing the carrier frequency $e^{j2\pi f_c t}$ ) can be expressed as
\begin{align} 
y_{m_{\mathrm{re}}}(t) 
& = \sum_{q=1}^{Q} \tilde{\beta}_{q} e^{- j2\pi f_c \big( \frac{2(R-V_{q}t)}{c} + \Delta{\tau}_{m_{\mathrm{bs}},m_{\mathrm{re}},q}  \big) }  \nonumber\\
&\quad\quad\quad\quad  \times 
s_{m_{\mathrm{bs}}}\bigg(t - \frac{2(R_{q}-V_{q}t)}{c} - \Delta{\tau}_{m_{\mathrm{bs}},m_{\mathrm{re}},q}  \bigg) \notag\\
& = \sum_{q=1}^{Q} \beta_{q} e^{j2\pi \nu_{q} t} e^{- j2\pi f_c \Delta{\tau}_{m_{\mathrm{bs}}},m_{\mathrm{re}},q} \nonumber\\
&\quad\quad\quad\quad  \times
s_{m_{\mathrm{bs}}}\big(t - \tau_{q} - \Delta{\tau}_{m_{\mathrm{bs}},m_{\mathrm{re}},q}  \big), \label{equ:echoSig2}
\end{align}
where $\beta_{q} = \tilde{\beta}_{q} e^{- j2\pi f_c \tau_{q}}$ is the complex-valued reflection coefficient of the $q$-th target; 
$\tau_{q} = \frac{2R_{q}}{c}$ and $\nu_{q} = f_c \frac{2V_{q}}{c}$ represent the round-trip time delay and the Doppler shift, respectively;
the second equality in \eqref{equ:echoSig2} exploits $s(t - \tau_{q}) \approx s\big(t - \tau_{q} + 2V_{q}t/c \big)$ due to the target velocity assumption $V_{q} \ll c$, allowing the constant time delay during the coherent processing interval \cite{Tigrek2012OFDM}. 
Accordingly, we can obtain the equivalent sensing channel model as
\begin{align}
	g_{m_{\mathrm{re}}, m_{\mathrm{bs}}}(t,\tau) 
	& = \sum_{q=1}^{Q}  \beta_{q}  e^{j2\pi \nu_{q} t} e^{- j2\pi f_c \Delta{\tau}_{m_{\mathrm{bs}}, m_{\mathrm{re}},q} }\nonumber\\
	&\quad\quad\quad\quad  \times
	\delta\big(\tau - \tau_{q} - \Delta{\tau}_{m_{\mathrm{bs}}, m_{\mathrm{re}},q} \big), \label{equ:G_t_tau}
\end{align}
where $g_{m_{\mathrm{re}}, m_{\mathrm{bs}}}(t,\tau) $ can be viewed as the linear time-varying filter and $\delta(\cdot)$ is the Dirac delta function. Taking the Fourier transform of \eqref{equ:G_t_tau}  and stacking the channel from all the transmit and receive antennas, the sensing channel can be expressed as
\begin{align}
\mathbf{G}(t,f)=\sum_{q=1}^{Q}\beta_{q}   \mathbf{a}_{\mathrm{re}}(\theta_q,f)  \mathbf{a}_{\mathrm{bs}}^{T}(\phi_q,f)
e^{-j2\pi f \tau_q} e^{j2\pi \nu_{q} t}, \label{equ:RadarChannel_FreqTime}
\end{align}
where $\mathbf{a}_{\mathrm{re}}(\theta_q,f) \in \mathbb{C}^{M_{\mathrm{re}} \times 1}$ is the array steering vector given by
\begin{align}\label{steering-re}
	\mathbf{a}_{\mathrm{re}}(\theta_q,f)&=\Big[1,e^{-j2\pi (1+\frac{f}{f_c})\frac{d\sin\theta_{q}}{\lambda_{c}}},\ldots,\nonumber\\
	&\quad\quad\quad\quad  
	e^{-j2\pi (1+\frac{f}{f_c}) \frac{(M_{\mathrm{re}}-1)d\sin\theta_{q}}{\lambda_{c}}} \Big]^T.
\end{align}

\subsection{Unified Model for Massive MIMO-ISAC}
From \eqref{equ:CommuChannel_FreqTime} and \eqref{equ:RadarChannel_FreqTime}, 
it can be interestingly observed that the communication channel and the sensing channel exhibit a similar mathematical expression, which lays the foundation for developing a unified algorithmic framework to estimate the unknown channel and target parameters. 
The similarities of communication and sensing channels in massive MIMO-ISAC are analyzed from the following aspects. 

Remark 1: 
From the communication theory viewpoint, instead of estimating the entire channel matrix $\mathbf{H}$, one can alternatively estimate the physical parameters representing the channel, i.e., $\lbrace \alpha_{p}, \theta_{p}, \phi_{p}, \tau_{p}, \nu_{p} \rbrace_{p=1}^{P}$. 
These channel parameters include angles, time delays, and Doppler shifts, which are completely consistent with the estimation of target parameters $\lbrace \beta_{q}, \theta_{q}, \phi_{q}, \tau_{q}, \nu_{q} \rbrace_{q=1}^{Q}$ \cite{Zhang2021Perceptive}. 
The similar intrinsic structure provides a premise for channel and target parameter estimation exploiting the same signals, and thus is attractive to reduce the training overhead as well as improve the spectral efficiency.

Remark 2: 
Both the communication/sensing models in \eqref{equ:CommuChannel_FreqTime} and \eqref{equ:RadarChannel_FreqTime} characterize the scattering characteristics of channels/targets via a similar mathematical expression, but the parameters inside $\mathbf{H}$ and $\mathbf{G}$ exhibit different physical meanings. 
For example, $\alpha_{p}$ measures the scattering property in the wireless propagation environment, while $\beta_{q}$ describes the scattering intensity of the sensing targets.
Besides, $\tau_{p}$ and $\nu_{p}$ stand for the one-way time delay and the Doppler shift associated with the communication propagation path, 
while $\tau_{q}$ and $\nu_{q}$ denote the round-trip time delay and the Doppler shift caused by the target, respectively. 

Remark 3: 
In massive MIMO-ISAC with wide signal bandwidth, the physical propagation delay across the array aperture cannot be ignored\footnote{For massive MIMO-ISAC with 64 BS antennas, system bandwidth $f_{s} = 1$ GHz, $f_c = 28$ GHz, $\theta = 60^{\circ}$, the signal delay from the first antenna to the last one is $0.9743 T_{s}$, which is non-negligible compared to the sampling period $T_{s}$.} 
and the signals received by the antenna array experiences extra time delays compared to the observation of the first antenna. 
Such phenomenon is referred to as the beam squint effect, i.e., the steering vectors \eqref{steering-bs}, \eqref{steering-ue}, \eqref{steering-re} become frequency-dependent even with the same signal incident direction \cite{Wang2018Spatial}.
This challenging problem has been mainly studied for massive MIMO channel estimation with a single communication functionality, e.g., \cite{Wang2019An,Wang2020ABlock}, but rarely considered in the target parameter estimation problem.

In the following, we will show how to utilize the same training signals to effectively perform both the channel estimation and target parameter estimation for massive MIMO-ISAC via a unified framework.

\section{Unified Tensor Formulation and Analysis for Channel and Target Parameter Estimation}
In this section, we first develop a unified tensor formulation for the problem of target sensing and channel state information acquisition. Then, we conduct the uniqueness condition analysis of uniquely determining the target and channel parameters with the observed signals.

\subsection{Unified Tensor Modeling for Channel and Target Parameter Estimation}
We first illustrate how to formulate the target parameter estimation as a tensor model in massive MIMO-ISAC for the case with negligible beam squint effects. The sensing channel matrix at the $k$-th subcarrier of the $n$-th OFDM symbol can be expressed as 
\begin{equation} \label{equ:SensingChannelModel_k_n}
	\mathbf{G}_{n,k} = \sum_{q=1}^{Q} \beta_{q} e^{-\mathrm{j} 2 \pi \tau_{q} f_{s} \frac{k}{K_{0}}} \mathbf{a}_{\mathrm{re}}(\theta_{q})  \mathbf{a}_{\mathrm{bs}}^{T}(\phi_{q}) e^{\mathrm{j} 2 \pi  n \nu_{q} T_{\mathrm{sym}} },
\end{equation}
where $\mathbf{G}_{n,k}$ equals to  $\mathbf{G}(t,f)|_{t = n T_{\mathrm{sym}}, f = k \Delta f} $, $\mathbf{a}_{\mathrm{re}}(\theta_{q}) = \mathbf{a}_{\mathrm{re}}(\theta_{q}, f)|_{f=0}$, 
the steering vectors  $\mathbf{a}_{\mathrm{bs}}(\phi_{q}) = \mathbf{a}_{\mathrm{bs}}(\phi_{q}, f)|_{f=0}$, 
$K_{0}$, $f_{s}$, and $ \Delta f$ are the total number of subcarriers, the bandwidth for the transmission, and the subcarrier spacing, respectively,  
$T_{\mathrm{sym}} = T_{\mathrm{eff}} + T_{\mathrm{cp}}$, 
$T_{\mathrm{eff}}$ and $T_{\mathrm{cp}}$ denote the duration of the effective OFDM symbol and cyclic prefix (CP), respectively. 
Let $\mathbf{s}_{n,k} \in \mathbb{C}^{M_{\mathrm{bs}} \times 1}$ be the transmitted training symbol, $\forall k \in \{1, 2, \ldots, K\}$, $\forall n \in \{1, 2, \ldots, N\}$, where $N$ and $K$ denote the number of symbols and subcarriers for training, respectively. 
The symbol blocks are transformed into the time domain exploiting $M_{\mathrm{bs}}$ parallel $K_{0}$-point inverse discrete Fourier transform. 
The CP is employed to suppress the intersymbol interference caused by both the maximum time delay of multipath and the time delay corresponding to the maximum detectable range.
After the CP removal and the discrete Fourier transform at the BS, the echo signal can be given by
\begin{align}
	\overline{\mathbf{y}}_{n,k} = \mathbf{G}_{n,k}  \mathbf{s}_{n,k} + \overline{\mathbf{n}}_{n,k}, \label{equ:ReceivedSignalOriginal} 
\end{align}
where $\mathbf{G}_{n,k} \in \mathbb{C}^{M_{\mathrm{re}} \times M_{\mathrm{bs}}}$ represents the sensing channel matrix, 
$\overline{\mathbf{n}}_{n,k} \sim \mathcal{CN}(\mathbf{0}, \sigma^2 \mathbf{I}_{M_{\mathrm{re}}})$ is the additive noise vector. 

Note that the echo signals of \eqref{equ:ReceivedSignalOriginal} with multiple OFDM symbols and subcarriers cannot be independently decomposed into different dimensions, and thus does not naturally conform to the typical tensor model \cite{Cheng2017Probabilistic}.
To facilitate both the channel and the target parameter estimation, we propose a {\it shared training pattern} with precoded training structure, i.e., $\mathbf{s}_{n,k} = \mathbf{p}_{n} x_{k}$, where $\mathbf{p}_{n} \in \mathbb{C}^{M_{\mathrm{bs}} \times 1}$ is the training precoder and $x_{k} \in \mathbb{C}$ is the pilot symbol satisfying $|x_{k}|^2 = 1$. 
Note that $\mathbf{p}_{n}$ varies for different OFDM symbols while $x_{k}$ remains unchanged during the training stage.
Also, the proposed training structure will subsequently assist to formulate the echo signals as a third-order tensor. 
After removing the pilot symbol at the receiver, the echo signals can be further expressed as
\begin{align}
& \mathbf{y}_{n,k} 
= x_{k}^{*} \overline{\mathbf{y}}_{n,k} \nonumber\\
& = \sum_{q=1}^{Q} \beta_{q} e^{-\mathrm{j} 2 \pi \tau_{q} f_{s} \frac{k}{K_{0}}} \mathbf{a}_{\mathrm{re}}(\theta_{q})  \mathbf{a}_{\mathrm{bs}}^{T}(\phi_{q}) \mathbf{p}_{n} e^{\mathrm{j} 2 \pi  n \nu_{q} T_{\mathrm{sym}} } + \mathbf{n}_{n,k}, \label{equ:ReceivedSignalVector}
\end{align}
where $\mathbf{n}_{n,k} = x_{k}^{*} \overline{\mathbf{n}}_{n,k}$ is the corresponding noise vector. 
Concatenating the echo signals of the $N$ training OFDM symbols yields
\begin{align}
	\mathbf{Y}_{k} 
	& =\! \sum_{q=1}^{Q} \beta_{q} e^{-\mathrm{j} 2 \pi \tau_{q} f_{s} \frac{k}{K_{0}}} \mathbf{a}_{\mathrm{re}}(\theta_{q})  \mathbf{a}_{\mathrm{bs}}^{T}(\phi_{q})  
	\mathbf{P} \mathbf{\Gamma}(\nu_{q}) + \mathbf{N}_{k},  \label{equ:ReceivedSignalMatrix}
\end{align}
where  
$\mathbf{Y}_{k} = [ \mathbf{y}_{1,k}, \mathbf{y}_{2,k}, \ldots, \mathbf{y}_{N, k}] \in \mathbb{C}^{M_{\mathrm{re}} \times N }$ contains the overall echo signals at the $k$-th subcarrier; 
$\mathbf{\Gamma}(\nu_{q}) \in \mathbb{C}^{N \times N }$ is a diagonal matrix with the $n$-th diagonal element being $e^{\mathrm{j} 2 \pi  n \nu_{q} T_{\mathrm{sym}} }$;  
$\mathbf{P} = [\mathbf{p}_{1}, \ldots, \mathbf{p}_{N} ] \in \mathbb{C}^{M_{\mathrm{bs}} \times N }$ is the training precoder matrix and $\mathbf{N}_{k} = [\mathbf{n}_{1}, \ldots, \mathbf{n}_{N}] \in \mathbb{C}^{M_{\mathrm{re}} \times N }$ denotes the noise matrix.

By collecting the echo signals of $K$ training subcarriers, we can reformulate the echo signals as a third-order tensor $\bm{\mathcal{Y}} \in \mathbb{C}^{M_\mathrm{re} \times N \times K}$ that admits the following CPD format \cite{Cheng2017Probabilistic}
\begin{align}
	\bm{\mathcal{Y}} & = \sum_{q=1}^{Q} \mathbf{a}_{\mathrm{re}}(\theta_{q}) \circ \mathbf{b}_{\mathrm{bs}}(\phi_{q}, \nu_{q}) 
	\circ \mathbf{c}(\beta_{q}, \tau_{q}) + \bm{\mathcal{N}},  \label{equ:CPD Model}
\end{align}
where $\bm{\mathcal{N}} \in \mathbb{C}^{M_\mathrm{re} \times N \times K}$ is the noise tensor, 
$\mathbf{b}_{\mathrm{bs}}(\phi_{q}, \nu_{q}) = \mathbf{\Gamma}(\nu_{q})  \mathbf{P}^{T} \mathbf{a}_{\mathrm{bs}}(\phi_{q})$, and $\mathbf{c}(\beta_{q}, \tau_{q}) = \beta_{q} \mathbf{a}_{\mathrm{td}}(\tau_{q}) $ with
\begin{align} \label{equ:a_tau}
	\mathbf{a}_{\mathrm{td}}(\tau_{q}) = \Big[e^{-\mathrm{j} 2 \pi \tau_{q} f_{s} \frac{1}{K_{0}}}, \ldots, e^{-\mathrm{j} 2 \pi \tau_{q} f_{s} \frac{K}{K_{0}}} \Big]^{T}.
\end{align}

Note that our goal is to estimate the unknown target parameters $\{\theta_{q}, \phi_{q}, \nu_{q}, \tau_{q}, \beta_{q} \}_{q=1}^{Q}$ and the target parameter estimation problem can be formulated as 
\begin{equation} \label{equ:TargetParameterEstimation}
	\minimize_{ \{\theta_{q}, \: \phi_{q}, \: \nu_{q}, \atop \: \tau_{q}, \: \beta_{q} \}_{q=1}^{Q} } \left\|\bm{\mathcal{Y}} - \! \sum_{q=1}^{Q}  \mathbf{a}_{\mathrm{re}}(\theta_{q})  \circ  \mathbf{b}_{\mathrm{bs}}(\phi_{q}, \nu_{q}) 
	\circ \mathbf{c}(\beta_{q}, \tau_{q}) \right\|_{F}^{2}\!\!\!.
\end{equation}

From \eqref{equ:TargetParameterEstimation}, we can observe that the unknown parameters $\{ \theta_{q}, \phi_{q}, \tau_{q}, \beta_{q}, \nu_{q} \}$ are nonlinearly coupled together and it is challenging to perform the joint optimization directly. 
Fortunately, aided by the uniqueness condition of CPD \cite{zhou2017low,park2019spatial,qian2019algebraic}, an effective strategy for the parameters estimation is to exploit the factor matrices of \eqref{equ:CPD Model} and they are given by
\begin{align}
	\mathbf{B}^{(1)} & \triangleq \big[ \mathbf{a}_{\mathrm{re}}(\theta_{1}), \ldots, \mathbf{a}_{\mathrm{re}}(\theta_{Q}) \big] 
	\in \mathbb{C}^{M_\mathrm{re} \times Q},  \label{equ:factor_matrix_1} \\
	\mathbf{B}^{(2)} & \triangleq \big[ \mathbf{b}_{\mathrm{bs}}(\phi_{1}, \nu_{1}), \ldots, \mathbf{b}_{\mathrm{bs}}(\phi_{Q}, \nu_{Q}) \big]
	\in \mathbb{C}^{N \times Q}, \label{equ:factor_matrix_2} \\
	\mathbf{B}^{(3)} & \triangleq \big[ \mathbf{c}(\beta_{1}, \tau_{1}), \ldots, \mathbf{c}(\beta_{Q}, \tau_{Q}) \big] 
	\in \mathbb{C}^{K \times Q},  \label{equ:factor_matrix_3}
\end{align}
which incorporate the unknown parameters to be estimated and are associated with the observations along the space, time, and frequency domains, respectively.

Meanwhile, the received training signals at the UE side are employed for channel estimation. Thanks to the proposed shared training pattern, the channel estimation problem can be represented as a third-order tensor via the similar modeling method as the echo signals given by \eqref{equ:SensingChannelModel_k_n}-\eqref{equ:factor_matrix_3}. 
Consequently, the channel estimation problem can be similarly formulated as:
\begin{equation} \label{equ:ChannelParameterEstimation}
	\minimize_{ \{\theta_{p}, \: \phi_{p}, \: \nu_{p}, \atop \: \tau_{p}, \: \alpha_{p} \}_{p=1}^{P} } \left\|\bm{\mathcal{R}} - \! \sum_{p=1}^{P}  \mathbf{a}_{\mathrm{ue}}(\theta_{p}) \! \circ \! \mathbf{b}_{\mathrm{bs}}(\phi_{p}, \nu_{p}) 
	\! \circ \! \mathbf{c}(\alpha_{p}, \tau_{p}) \right\|_{F}^{2}\!\!\!,
\end{equation}
where $\bm{\mathcal{R}} \in \mathbb{C}^{M_\mathrm{ue} \times N \times K}$ is the observation tensor at the UE side for channel estimation. 
In the following, we discuss the uniqueness condition for the identification of these parameters.

\subsection{Uniqueness Property Analysis for Unified Tensor Formulation}

The uniqueness condition of the CPD problem is essential for channel and target parameter estimation, ensuring that the decomposed factor matrices incorporate the entire information of unknown parameters.
The uniqueness of CPD can be guaranteed by Kruskal’s condition \cite{Stegeman2007On}, which is given as follows. 

\begin{lemma} \label{lemma:Kruskal}
	Let $\mathbf{B}^{(1)} \in \mathbb{C}^{I_{1} \times Q}$, $\mathbf{B}^{(2)}\in \mathbb{C}^{I_{2} \times Q}$, and $\mathbf{B}^{(3)}\in \mathbb{C}^{I_{3} \times Q}$ be the factor matrices of $\bm{\mathcal{X}} \in \mathbb{C}^{I_{1} \times I_{2} \times I_{3}}$.
	Suppose the Kruskal’s condition
	$
		k_{\mathbf{B}^{(1)}} + k_{\mathbf{B}^{(2)}} + k_{\mathbf{B}^{(3)}} \geq 2Q + 2
	$ holds,
	then the CPD of $\bm{\mathcal{X}}$ is said to be unique. 
	In the generic case\footnote{The generic case means that the entries of the factor matrices are drawn from absolutely continuous probability density functions with probability one \cite{qian2019algebraic}.}, the uniqueness condition becomes
	\begin{align} \label{equ:UniquenessCondition}
		\min(I_{1},Q) +  \min(I_{2},Q) + \min(I_{3}, Q) \geq 2Q + 2.
	\end{align}
\end{lemma}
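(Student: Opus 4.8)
The statement is Kruskal's classical uniqueness theorem, so my plan is to follow the algebraic route of \cite{Stegeman2007On} rather than Kruskal's original combinatorial argument. The two assertions demand very different effort, and I would establish the deterministic Kruskal condition first and then deduce the generic statement from it. The generic case is in fact the cheap consequence: for a matrix $\mathbf{B}^{(i)} \in \mathbb{C}^{I_{i} \times Q}$ whose entries are drawn from an absolutely continuous distribution, I would show $k_{\mathbf{B}^{(i)}} = \min(I_{i}, Q)$ with probability one. The determinant of any fixed $\min(I_{i},Q) \times \min(I_{i},Q)$ submatrix is a nontrivial polynomial in the entries, so its zero set has Lebesgue measure zero; taking the finite union over all such submatrices, almost surely every collection of $\min(I_{i},Q)$ columns is linearly independent, which is exactly $k_{\mathbf{B}^{(i)}} = \min(I_{i},Q)$. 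Substituting this identity into $k_{\mathbf{B}^{(1)}} + k_{\mathbf{B}^{(2)}} + k_{\mathbf{B}^{(3)}} \geq 2Q + 2$ reproduces \eqref{equ:UniquenessCondition} verbatim, so the generic claim is immediate once the deterministic one is in hand.

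The deterministic Kruskal condition is the substantial part. I would argue by contradiction, assuming $\bm{\mathcal{X}}$ admits two rank-$Q$ CPDs with factor triples $(\mathbf{B}^{(1)}, \mathbf{B}^{(2)}, \mathbf{B}^{(3)})$ and $(\bar{\mathbf{B}}^{(1)}, \bar{\mathbf{B}}^{(2)}, \bar{\mathbf{B}}^{(3)})$, and aim to show that the two triples agree up to a common column permutation and diagonal scaling. The central device is Kruskal's Permutation Lemma: two matrices $\mathbf{A}$ and $\bar{\mathbf{A}}$ with the same number of columns and no zero columns coincide up to a common column permutation and scaling, provided that for every vector $\mathbf{w}$, whenever $\bar{\mathbf{A}}^{T}\mathbf{w}$ is sufficiently sparse, in a threshold governed by $k_{\mathbf{A}}$, the vector $\mathbf{A}^{T}\mathbf{w}$ is at least as sparse. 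Matricizing $\bm{\mathcal{X}}$ along the third mode yields the identity $\mathbf{B}^{(3)}(\mathbf{B}^{(2)} \odot \mathbf{B}^{(1)})^{T} = \bar{\mathbf{B}}^{(3)}(\bar{\mathbf{B}}^{(2)} \odot \bar{\mathbf{B}}^{(1)})^{T}$, and the task becomes transferring information from the Khatri--Rao products back onto the individual factors. Verifying the sparsity hypothesis of the permutation lemma yields $\bar{\mathbf{B}}^{(3)} = \mathbf{B}^{(3)} \mathbf{\Pi} \mathbf{\Lambda}_{3}$ for a permutation $\mathbf{\Pi}$ and a nonsingular diagonal $\mathbf{\Lambda}_{3}$; cycling the same argument over the remaining two modes produces matching relations for $\mathbf{B}^{(1)}$ and $\mathbf{B}^{(2)}$ with the \emph{same} $\mathbf{\Pi}$ and scalings whose product is the identity, which is precisely essential uniqueness.

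The main obstacle is the sparsity estimate that feeds the permutation lemma. One must lower-bound the number of nonzero entries of an arbitrary nontrivial linear combination arising from the rows of $(\mathbf{B}^{(2)} \odot \mathbf{B}^{(1)})^{T}$ using only $k_{\mathbf{B}^{(1)}}$ and $k_{\mathbf{B}^{(2)}}$, and the governing intermediate fact is the Khatri--Rao Kruskal-rank bound $k_{\mathbf{B}^{(2)} \odot \mathbf{B}^{(1)}} \geq \min(k_{\mathbf{B}^{(1)}} + k_{\mathbf{B}^{(2)}} - 1, Q)$. It is exactly this combinatorial counting, aggregated over the three modes, that forces the threshold $2Q + 2$ and closes the contradiction. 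I expect assembling this zero-counting cleanly, rather than any single algebraic manipulation, to be the delicate part of the argument, and it is the reason the full theorem is deep enough that in the paper I would simply invoke it from \cite{Stegeman2007On}.
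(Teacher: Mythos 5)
Your proposal is correct and follows essentially the same route as the paper, which offers no proof of its own but simply defers to \cite{Stegeman2007On} --- and the argument you sketch (Kruskal's Permutation Lemma applied to the mode unfoldings, fed by the Khatri--Rao Kruskal-rank bound $k_{\mathbf{B}^{(2)} \odot \mathbf{B}^{(1)}} \geq \min(k_{\mathbf{B}^{(1)}} + k_{\mathbf{B}^{(2)}} - 1, Q)$) is precisely the proof strategy of that cited reference. Your reduction of the generic case to the deterministic one, via the almost-sure identity $k_{\mathbf{B}^{(i)}} = \min(I_{i}, Q)$ from the measure-zero vanishing set of subdeterminant polynomials, is also the standard and correct way to obtain \eqref{equ:UniquenessCondition}.
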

\begin{proof}
	Please refer to \cite{Stegeman2007On} for the detailed proof.
\end{proof}

Lemma \ref{lemma:Kruskal} provides a sufficient condition to determine the uniqueness of the CPD problem. Alternatively, given the number of receive antennas, OFDM symbols, and training subcarriers, the number of targets $Q$ that can be recovered is determined. 
This value can be further enlarged after considering the algebraic structures of the factor matrices, e.g., Vandemonde structure \cite{Srensen2013Blind}. 
In the considered target parameter estimation \eqref{equ:TargetParameterEstimation} (or channel estimation \eqref{equ:ChannelParameterEstimation}) problem, we find that $\mathbf{a}_{\mathrm{td}}(\tau_{q})$ can be viewed as a steering vector and thus the factor matrix $\mathbf{B}^{(3)}$ exhibits a Vandermonde structure,
whose generators are $\big\{ z_{\tau,q} = e^{- \mathrm{j} 2 \pi \tau_{q} f_{s} / K} \big\}_{q=1}^{Q}$. 
With this structural information, one can obtain the following relaxed uniqueness condition.

\begin{lemma} \label{lemma:Lemma2_Vander}
Let $\mathbf{B}^{(1)} \in \mathbb{C}^{I_{1} \times Q}$, $\mathbf{B}^{(2)}\in \mathbb{C}^{I_{2} \times Q}$, and $\mathbf{B}^{(3)}\in \mathbb{C}^{I_{3} \times Q}$ be the factor matrices of $\bm{\mathcal{X}} \in \mathbb{C}^{I_{1} \times I_{2} \times I_{3}}$,  
where $\mathbf{B}^{(3)}$ exhibits the Vandermonde structure with distinct generators.  
If 
\begin{align}
\begin{cases}
\mathrm{rank} \big(\mathbf{B}^{(K_3-1, 3)} \odot \mathbf{B}^{(2)} \big) = Q, \\
\mathrm{rank} \big(\mathbf{B}^{(L_3, 3)} \odot \mathbf{B}^{(1)} \big) = Q, \label{equ:Lemma2_Vander}
\end{cases}
\end{align}
with $K_{3} + L_{3} = I_{3} + 1$, the CPD of $\bm{\mathcal{X}}$ is said to be unique. 
In the generic case, \eqref{equ:Lemma2_Vander} becomes
\begin{align}
	\min \big( I_{2} (K_{3}-1),  I_{1}L_{3} \big) \geq Q. \label{equ:theoUniqueness}
\end{align}
\end{lemma}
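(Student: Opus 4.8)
The plan is to exploit the Vandermonde structure of $\mathbf{B}^{(3)}$ to collapse the third-order CPD into a rank-$Q$ matrix factorization whose two factors have full column rank, and then to resolve the residual rotational ambiguity of that factorization by a shift-invariance (ESPRIT-type) argument. First I would reshape the third mode. Writing the distinct generators of $\mathbf{B}^{(3)}$ as $z_1,\ldots,z_Q$ and using $z_q^{\,k+l-2} = z_q^{\,k-1}z_q^{\,l-1}$, each length-$I_3$ Vandermonde column Hankelizes into the outer product of two shorter Vandermonde vectors of lengths $K_3$ and $L_3$ with $K_3+L_3 = I_3+1$. This rewrites $\bm{\mathcal{X}}$ as a fourth-order CPD with factors $\mathbf{B}^{(1)}$, $\mathbf{B}^{(2)}$, $\mathbf{B}^{(K_3,3)}$, and $\mathbf{B}^{(L_3,3)}$, the last two being Vandermonde with the same generators, and without any loss of information.

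Next I would merge modes to obtain a single matrix equation. Grouping mode-$1$ with the length-$L_3$ Vandermonde mode and mode-$2$ with the length-$K_3$ Vandermonde mode yields
\begin{equation}
\mathbf{S} = \big(\mathbf{B}^{(L_3,3)} \odot \mathbf{B}^{(1)}\big)\big(\mathbf{B}^{(K_3,3)} \odot \mathbf{B}^{(2)}\big)^{T}.
\end{equation}
The second rank hypothesis in \eqref{equ:Lemma2_Vander} makes $\mathbf{G} \triangleq \mathbf{B}^{(L_3,3)} \odot \mathbf{B}^{(1)}$ of rank $Q$, and since appending rows cannot decrease rank, the first hypothesis (stated for $K_3-1$ rows) forces $\mathbf{F} \triangleq \mathbf{B}^{(K_3,3)} \odot \mathbf{B}^{(2)}$ to be of rank $Q$ as well. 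Thus $\mathbf{S} = \mathbf{G}\mathbf{F}^{T}$ is a rank-$Q$ factorization with both factors of full column rank, so any alternative CPD must satisfy $\tilde{\mathbf{G}} = \mathbf{G}\mathbf{T}$ and $\tilde{\mathbf{F}} = \mathbf{F}\mathbf{T}^{-T}$ for some invertible $\mathbf{T}$, and the task reduces to proving that $\mathbf{T}$ is a permutation times a diagonal matrix.

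To pin down $\mathbf{T}$ I would invoke the shift invariance carried by $\mathbf{F}$. Its block structure gives $\mathbf{F}_{\uparrow} = \mathbf{F}_{\downarrow}\mathbf{Z}$, where $\mathbf{F}_{\downarrow} = \mathbf{B}^{(K_3-1,3)} \odot \mathbf{B}^{(2)}$ and $\mathbf{Z} = \mathrm{diag}(z_1,\ldots,z_Q)$ has \emph{distinct} diagonal entries because the generators are distinct. Applying the same shift relation to $\tilde{\mathbf{F}}$ and then cancelling $\mathbf{F}_{\downarrow}$ on the left, which is legitimate precisely because the first hypothesis in \eqref{equ:Lemma2_Vander} makes $\mathbf{F}_{\downarrow}$ full column rank, I would arrive at $\mathbf{Z}\mathbf{T}^{-T} = \mathbf{T}^{-T}\tilde{\mathbf{Z}}$. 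This is an eigenvalue identity for a diagonal matrix with distinct entries, whose only eigenvectors are scaled standard basis vectors; hence $\mathbf{T}^{-T}$ is a scaled permutation, $\{\tilde z_q\} = \{z_q\}$, and the CPD is essentially unique up to the unavoidable permutation and scaling. This simultaneous-diagonalization step is the crux: the legitimacy of the cancellation together with the distinct-generator reasoning is where I expect the main obstacle, and it is exactly the two rank hypotheses that make each half go through.

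Finally, for the generic statement I would use the fact that the Khatri-Rao product of two matrices of sizes $a\times Q$ and $b\times Q$ with entries drawn from absolutely continuous distributions attains its maximal rank $\min(ab,Q)$ with probability one. Applying this to $\mathbf{B}^{(K_3-1,3)} \odot \mathbf{B}^{(2)}$ (with $a=K_3-1$, $b=I_2$) and to $\mathbf{B}^{(L_3,3)} \odot \mathbf{B}^{(1)}$ (with $a=L_3$, $b=I_1$), the two rank conditions reduce to $(K_3-1)I_2 \geq Q$ and $L_3 I_1 \geq Q$, that is $\min\!\big(I_2(K_3-1),\,I_1 L_3\big) \geq Q$, which is exactly \eqref{equ:theoUniqueness}.
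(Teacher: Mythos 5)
Your proposal is correct and takes essentially the same route as the paper, whose proof is a one-line deferral to Theorem III.3 of \cite{Srensen2013Blind}: that theorem's proof is precisely your chain of Hankelizing the Vandermonde mode (spatial smoothing), obtaining the full-column-rank factorization $\mathbf{S}=\mathbf{G}\mathbf{F}^{T}$, pinning down the ambiguity matrix $\mathbf{T}$ via the shift-invariance/eigenvector argument with distinct generators, and invoking the generic maximal-rank property of Khatri-Rao products to arrive at \eqref{equ:theoUniqueness}. The only caveat worth recording is that your shift relation for $\tilde{\mathbf{F}}$ tacitly assumes the competing decomposition also has a Vandermonde third factor, so what is actually proved is uniqueness within the Vandermonde-constrained class; this is exactly the scope of the cited theorem, and indeed the only sense in which the lemma can hold, since the values of $Q$ admitted by \eqref{equ:theoUniqueness} far exceed what unconstrained CPD uniqueness permits.
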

\begin{proof}
The proof similarly follows that of Theorem III.3 in \cite{Srensen2013Blind} by taking the mode-1 unfolding of $\bm{\mathcal{Y}}$ in the noiseless setting, i.e., 
$\mathbf{Y}_{(1)}^{T} = \big( \mathbf{B}^{(3)} \odot \mathbf{B}^{(2)} \big) \mathbf{B}^{(1)T}$.
\end{proof}

\renewcommand\arraystretch{1.4}
\begin{table}[!tp]
	\centering
	\caption{The maximum number of resolvable targets under the uniqueness conditions given in Lemma \ref{lemma:Kruskal} and Lemma \ref{lemma:Lemma2_Vander}, where $K = 16$, $M_\mathrm{re} = 8$.} 
	\begin{tabular}{|c|c|c|c|c|c|c|}
		\hline
		$N$ & 10 & 12 & 14 & 16 & 18 & 20 \\
		\hline
		Lemma \ref{lemma:Kruskal} & 
		16 & 17 & 18 & 19 & 20 & 21 \\
		\hline
		Lemma \ref{lemma:Lemma2_Vander}, $K_3 = 3$ & 
		20 & 24 & 28 & 32 & 36 & 40 \\
		\hline
		Lemma \ref{lemma:Lemma2_Vander}, $K_3 = 5$ 
		& 40 & 48 & 56 & 64 & 72 & 80 \\
		\hline
		Lemma \ref{lemma:Lemma2_Vander}, $K_3 = 7$ 
		& 60 & 72 & 80 & 80 & 80 & 80 \\
		\hline
	\end{tabular}
	\label{Table:max_num_tar}
\end{table}

Lemma \ref{lemma:Lemma2_Vander} shows a stronger identifiability result for the Vandermonde constrained CPD problems and reveals the following two practical insights to the parameter estimation in massive MIMO-ISAC systems.
First, the uniqueness condition in Lemma \ref{lemma:Lemma2_Vander} holds true for the circumstances that the factor matrices have dependent columns. 
This allows reliable parameters estimation even when different targets are located closely spaced in direction, providing higher spatial resolution for target sensing.  
Second, given the dimensions of factor matrices, Lemma \ref{lemma:Lemma2_Vander} indicates that it is possible to deal with the CPD problem with higher tensor rank than that of traditional CPD models without structural constraints \cite{kolda2009tensor}. 
This means that the maximum number of resolvable targets can be improved when considering the Vandermonde constraints. 
Table \ref{Table:max_num_tar} compares the upper bound of the maximal target number $Q$ under different uniqueness conditions and it can be seen that Lemma \ref{lemma:Lemma2_Vander} yields the relaxed uniqueness conditions.

\section{Unified Channel and Target Parameter Estimation Algorithm}

In this section, we propose a unified parameter estimation algorithm that consists of two phases.
In the first phase, we explain the details of how to estimate the factor matrices in \eqref{equ:TargetParameterEstimation}. 
In the second phase, we illustrate how to leverage the estimated factor matrices to extract the unknown parameters.
Besides, we discuss the practical issues for implementing the proposed algorithm.

\subsection{Estimation of Factor Matrices}

To estimate the factor matrices from the tensor $\bm{\mathcal{Y}}$, one of the most well-known algorithms is the alternating least squares (ALS)-based method \cite{kolda2009tensor}. 
Although the ALS-based method is efficient to acquire the estimation of factor matrices, it does not take the structural constraints of factor matrices into account and generally limits the maximum number of resolvable targets. 
To address this issue, we employ the Vandermonde structure in $\mathbf{B}^{(3)}$ and propose an improved parameter estimation method based on Lemma \ref{lemma:Lemma2_Vander}.
To be specific, taking the matrix unfolding of $\bm{\mathcal{Y}}$ along its first dimension and the transpose operation, we have
\begin{align}
	\mathbf{Y}_{(1)}^{T} = ( \mathbf{B}^{(3)} \odot \mathbf{B}^{(2)} ) \mathbf{B}^{(1)T} + \mathbf{N}_{(1)}^{T}, \label{equ:Mode1Unfolding}
\end{align}
where $\mathbf{N}_{(1)} \in \mathbb{C}^{M_{\mathrm{re}} \times K N}$ is the corresponding noise matrix.
Due to the Vandermonde structure of $\mathbf{B}^{(3)}$, the spatial smoothing technique can be applied to expand the dimension of $\mathbf{Y}_{(1)}^{T}$. 
We construct a cyclic selection matrix
$\mathbf{J}_{l_3} = \big[\mathbf{0}_{K_3 \times l_3 - 1}  \ \mathbf{I}_{K_3} \ \mathbf{0}_{K_3 \times L_3 - l_3} \big] \otimes \mathbf{I}_{N}  \in \mathbb{C}^{K_3 N  \times K N}$ and vary $l_3$ from 1 to $L_3$, yielding
\begin{align}
	\mathbf{Y}_{S} & = \big[
	\mathbf{J}_{1} \mathbf{Y}_{(1)}^{T} \ \mathbf{J}_{2} \mathbf{Y}_{(1)}^{T} \ \ldots 
	\ \mathbf{J}_{L_3} \mathbf{Y}_{(1)}^{T}
	\big]  \nonumber\\
	& = \big( \mathbf{B}^{(K_3, 3)} \odot \mathbf{B}^{(2)} \big) \mathbf{\Lambda}
	\big( \mathbf{B}^{(L_3, 3)} \odot \mathbf{B}^{(1)} \big)^T + \mathbf{N}_{S},
\end{align}
where $K_3 + L_3 = K + 1$, the second equality exploits the Vandermonde property, $\mathbf{\Lambda}$ is a diagonal matrix, and $\mathbf{N}_{S} \in \mathbb{C}^{K_3 N \times L_3 M_\mathrm{re}}$ is the corresponding noise matrix.

Sequentially, one can follow \cite{Srensen2013Blind} and utilize the ESPRIT-like method to fulfil the factor matrices estimation. 
To be specific, compute the singular value decomposition
$\mathbf{Y}_{S} = \mathbf{U} \bm{\Sigma} \mathbf{V}^H = \mathbf{U}_{s} \bm{\Sigma}_{s} \mathbf{V}_{s}^H + \mathbf{U}_{n} \bm{\Sigma}_{n} \mathbf{V}_{n}^H$, where the $Q$ principal singular vectors in $\mathbf{U}$ and $\mathbf{V}$ span the signal subspaces $\mathbf{U}_{s}$ and $\mathbf{V}_{s}$, respectively, $\bm{\Sigma}_{s}$ is diagonal matrix formed by the $Q$ largest singular values. 
Define the submatrices $\mathbf{U}_{1} = [\mathbf{U}_{s}]_{1:(K_3-1)N,:}$ and
$\mathbf{U}_{2} = [\mathbf{U}_{s}]_{N + 1: K_3 N,:}$, 
and compute the eigenvalue decomposition as 
$\mathbf{U}_{1}^{\dagger} \mathbf{U}_{2}= \mathbf{M} \mathbf{Z} \mathbf{M}^H$. 
Accordingly, the $q$-th column of the factor matrix $\hat{\mathbf{B}}^{(3)}$ can be reconstructed as
\begin{align}
\hat{\mathbf{b}}_{q}^{(3)} = [\hat{z}_{\tau,q}, \hat{z}_{\tau,q}^2, \ldots, \hat{z}_{\tau,q}^{K}]^T, \label{equ:bb3_hat}
\end{align}
where $\hat{z}_{\tau,q} = [\mathbf{Z}]_{q,q}/\big| [\mathbf{Z}]_{q,q} \big|$ is the estimated generator and $[\mathbf{Z}]_{q,q}$ is $q$-th diagonal element of $\mathbf{Z}$.
Then, the $q$-th column of $\hat{\mathbf{B}}^{(2)}$ and $\hat{\mathbf{B}}^{(1)}$ can be constructed as
\begin{align}
\hat{\mathbf{b}}_{q}^{(2)} = \big( \hat{\mathbf{b}}_{q}^{(K_3, 3) H} \otimes \mathbf{I}_{N} \big) \mathbf{U}_{s} \mathbf{m}_{q}, \label{equ:bb2_hat}
\end{align}
\begin{align}
\hat{\mathbf{b}}_{q}^{(1)} = \big( \hat{\mathbf{b}}_{q}^{(L_3, 3) H} \otimes  \mathbf{I}_{M_{\mathrm{re}}} \big) \mathbf{V}_{s}^{*} \bm{\Sigma}_{s} \mathbf{t}_{q}, \label{equ:bb1_hat}
\end{align}
respectively, where $\mathbf{m}_{q}$ is the $q$-th column of $\mathbf{M}$ and 
$\mathbf{t}_{q}$ is the $q$-th column of $\mathbf{M}^{*}$. 
The estimated factor matrices in \eqref{equ:bb3_hat}-\eqref{equ:bb1_hat} and the true factor matrices satisfy the relation
\begin{align} \label{equ:FactorMatrixEst}
	\hat{\mathbf{B}}^{(\kappa)} = \mathbf{B}^{(\kappa)} \mathbf{\Pi} \mathbf{\Delta}_{\kappa} + \mathbf{E}_{\kappa},  \quad  \kappa = 1,2,3,
\end{align}
where $\mathbf{\Delta}_{\kappa}$ is the unknown diagonal matrix satisfying $\mathbf{\Delta}_{1}\mathbf{\Delta}_{2}\mathbf{\Delta}_{3} = \mathbf{I}_{Q}$. 
$\mathbf{\Pi} \in \mathbb{C}^{Q \times Q}$ represents a permutation matrix and 
$\mathbf{E}_{\kappa}$ denotes the errors of tensor factorization.

\subsection{Extraction of Unknown Parameters}
\label{subsec:TVFS Channel Parameters}

In this subsection, we turn to the extraction of unknown parameters  $\theta_{q}$, $\phi_{q}$, $\tau_{q}$, $\beta_{q}$, and $\nu_{q}$ with the estimated $\lbrace \hat{\mathbf{B}}^{(1)}, \hat{\mathbf{B}}^{(2)}, \hat{\mathbf{B}}^{(3)} \rbrace$.

By observing the relationship of factor matrices in \eqref{equ:FactorMatrixEst}, we find that each column of $\mathbf{B}^{(1)}$ is associated one angle of arrival. 
Accordingly, we can separately extract the AoA $\theta_{q}$ via the correlation-based method with respect to each column of $\hat{\mathbf{B}}^{(1)}$ \cite{zhou2017low}, i.e.,
\begin{align} \label{equ:AoA_est}
\hat{\theta}_{q} 
= \arg\max_{\theta} \Big| \big(\hat{\mathbf{b}}_{q}^{(1)}\big)^{H} \mathbf{a}_{\mathrm{re}}(\theta) \Big|^{2},
\end{align}
where $\hat{\mathbf{b}}_{q}^{(1)}$ is the $q$-th column of the estimated factor matrix $\hat{\mathbf{B}}^{(1)}$. 
The time delay $\{ \hat{\tau}_{q} \}$ can be directly obtained via the estimated generators $\{ \hat{z}_{\tau,q} \}$ of the factor matrix $\hat{\mathbf{B}}^{(3)}$, i.e.,
\begin{align}
\hat{\tau}_{q} = - \frac{K_0}{2\pi f_{s}}  \angle \hat{z}_{\tau,q}, \ q = 1, \ldots, Q,  \label{equ:tau_est}
\end{align}
where $\angle$ denotes the phase angle extraction operator.

The AoD and the Doppler shift are incorporated in the estimated factor matrix $\hat{\mathbf{B}}^{(2)}$ and the corresponding $q$-th column $\hat{\mathbf{b}}_{q}^{(2)}$ satisfies
\begin{equation} \label{equ:Sigma_a_theta}
	\hat{\mathbf{b}}_{q}^{(2)}
	= \delta_{q} \mathbf{\Gamma}(\nu_{q}) \mathbf{P}^{T} \mathbf{a}_{\mathrm{bs}}(\phi_{q}) + \mathbf{e}_{q},
\end{equation}
where $\delta_{q}$ and $\mathbf{e}_{q}$ are the $q$-th diagonal element and the $q$-th column of $\mathbf{\Delta}_{2}$ and $\mathbf{E}_{2}$, respectively. 
The joint estimation of the AoD $\phi_{q}$ and the Doppler shift $\nu_{q}$ from $\hat{\mathbf{b}}_{q}^{(2)}$ is given by 
\begin{align} 
	\lbrace \hat{\nu}_{q}, \hat{\phi}_{q} \rbrace = \arg\min_{\nu_{q}, \phi_{q}} 
	\Big\lVert \hat{\mathbf{b}}_{q}^{(2)}  - \delta_{q} \mathbf{\Gamma}(\nu_{q}) \mathbf{P}^{T} \mathbf{a}_{\mathrm{bs}}(\phi_{q}) \Big\rVert_{2}^2. \label{equ:JointAoDandDopplerProblem}
\end{align}

As can be observed from \eqref{equ:JointAoDandDopplerProblem}, three unknown parameters $\delta_{q}$, $\nu_{q}$, and $\phi_{q}$ are coupled together, which is very challenging to jointly estimate these three parameters.
Besides, since the unknown Doppler shift $\nu_{q}$ introduces extra phase offsets, conventional correlation-based methods that ignore the phase offsets will cause severe performance degradation. 
To deal with the aforementioned difficulties, we propose an alternative optimation that can iteratively refine the estimate of AoD $\hat{\phi}_{q}$ and the Doppler shift $\hat{\nu}_{q}$, $\forall q$.

Given $\hat{\nu}_{q}^{[i-1]}$ as the Doppler shift estimate at the $(i-1)$-th iteration, the AoD estimation at the $i$-th iteration is formulated as the following problem
\begin{align} 
\lbrace \hat{\phi}_{q}^{[i]} \rbrace 
& = \arg\min_{\phi_{q}} 
\Big\lVert \hat{\mathbf{b}}_{q}^{(2)}  - \delta_{q} \mathbf{\Gamma}(\hat{\nu}_{q}^{[i-1]}) \mathbf{P}^{T} \mathbf{a}_{\mathrm{bs}}(\phi_{q}) \Big\rVert_{2}^2 \nonumber\\
& = \arg\min_{\phi_{q}} 
\Big\lVert \hat{\mathbf{d}}_{q}^{[i-1]}  - \delta_{q} \mathbf{P}^{T} \mathbf{a}_{\mathrm{bs}}(\phi_{q}) \Big\rVert_{2}^2, \label{equ:AoDandDoppler_AoD}
\end{align}
where the second equality comes from $\hat{\mathbf{d}}_{q}^{[i-1]} = \mathbf{\Gamma}^{-1}\big(\hat{\nu}_{q}^{[i-1]} \big) \hat{\mathbf{b}}_{q}^{(2)}$. 
On the other hand, given the estimated AoD $\hat{\phi}_{q}^{[i]}$, the estimation of Doppler shifts at the current iteration is expressed as
\begin{align} 
	\lbrace \hat{\nu}_{q}^{[i]} \rbrace & = \arg\min_{\nu_{q}} 
	\Big\lVert \hat{\mathbf{b}}_{q}^{(2)}  - \delta_{q} \mathbf{\Gamma}(\nu_{q}) \mathbf{P}^{T} \mathbf{a}_{\mathrm{bs}}(\hat{\phi}_{q}^{[i]}) \Big\rVert_{2}^2  \nonumber\\
	& = \arg\min_{\nu_{q}} 
	\Big\lVert \hat{\mathbf{b}}_{q}^{(2)}  - \delta_{q} \mathbf{Q}^{T} \mathbf{a}_{\mathrm{do}}(\nu_{q}) \Big\rVert_{2}^2, \label{equ:AoDandDoppler_Doppler}
\end{align}
where the second equality follows from defining a diagonal matrix $\mathbf{Q}$, whose diagonal elements are $\mathbf{P}^{T} \mathbf{a}_{\mathrm{bs}}\big(\hat{\phi}_{q}^{[i]}\big)$, and extracting the diagonal elements of $\mathbf{\Gamma}(\nu_{q})$ as
\begin{align} \label{equ:a_doppler}
	\mathbf{a}_{\mathrm{do}}(\nu_{q}) = \Big[e^{\mathrm{j} 2 \pi \nu_{q} T_{\mathrm{sym}} }, e^{\mathrm{j} 4 \pi \nu_{q} T_{\mathrm{sym}} }, \ldots, e^{\mathrm{j} 2 \pi  N \nu_{q} T_{\mathrm{sym}} } \Big]^{T}.
\end{align}

Now, the joint optimization problem in \eqref{equ:JointAoDandDopplerProblem} can be recast as two individual estimation problems for each variable. 
Desipte the solution of \eqref{equ:AoDandDoppler_AoD} and \eqref{equ:AoDandDoppler_Doppler} can be obtained via a one-dimensional search, 
we can leverage the Vandermonde structure of $\mathbf{a}_{\mathrm{bs}}(\phi)$ and $\mathbf{a}_{\mathrm{do}}(\nu)$ and derive the closed-form solution by a polynomial method \cite{park2019spatial}. 
To be specific, for \eqref{equ:AoDandDoppler_AoD}, denoting the steering vector $\mathbf{a}_{\mathrm{bs}}(\phi)$ as $\mathbf{a}(z_{\phi}) = \big[1, z_{\phi}, \ldots, z_{\phi}^{M_{\mathrm{bs}}-1} \big]^T$ with $\big\{ z_{\phi} = e^{ - \mathrm{j} 2 \pi d \sin\phi / \lambda_{c}} \big\}$, 
we have
\begin{align}
	\hat{z}_{\phi} & = \arg\min_{z_{\phi}} \Bigg(1 - \frac{\big| \hat{\mathbf{d}}_{q}^{H} \mathbf{P}^{T} \mathbf{a}(z_{\phi}) \big|^{2} }{\big\| \hat{\mathbf{d}}_{q}\big\|_{2}^{2} \big\|\mathbf{P}^{T} \mathbf{a}(z_{\phi})\big\|_{2}^{2}}  \Bigg)    \nonumber\\
	& = \arg\min_{z_{\phi}} \Bigg( \frac{ \mathbf{a}^{H}(z_{\phi})   \mathbf{W}_{\mathbf{P},q}  \mathbf{a}(z_{\phi}) }{\big\| \hat{\mathbf{d}}_{q}\big\|_{2}^{2} \big\|\mathbf{P}^{T} \mathbf{a}(z_{\phi})\big\|_{2}^{2}}  \Bigg), \label{equ:AoD_est_roots}
\end{align}
where $\mathbf{W}_{\mathbf{P},q} =  \mathbf{P}^{*} \Big( \big\| \hat{\mathbf{d}}_{q}\big\|_{2}^{2} \mathbf{I}_{N} - \hat{\mathbf{d}}_{q} \big(\hat{\mathbf{d}}_{q}\big)^{H} \Big) \mathbf{P}^{T}$.   
The numerator in \eqref{equ:AoD_est_roots} can be further expressed as a polynomial with respect to $z_{\phi}$ and equals to zero in the noiseless case, yielding
\begin{align}
	\mathbf{a}^{H}(z_{\phi})   \mathbf{W}_{\mathbf{P},q}  \mathbf{a}(z_{\phi}) 
	= \!\!\!\!\!\!\!\! \sum_{m = -M_{\mathrm{bs}} + 1}^{M_{\mathrm{bs}}-1} \!\!\! \Bigg( \! \sum_{n_1 - n_2 = m}  \!\!\! [ \mathbf{W}_{\mathbf{P},q} ]_{n_1, n_2} \Bigg)  z_{\phi}^{m} = 0.
\end{align}
There are $2 (M_{\mathrm{bs}}-1)$ roots and half of them are located within the unit circle. 
Let $\{ \gamma_{m} \}_{m=1}^{M_{\mathrm{bs}}-1}$ be the roots with unit amplitudes within the unit circle. 
Then, we can acquire the solution $z_{\phi}^{\star}$ of \eqref{equ:AoD_est_roots} by searching over $z_{\phi} \in \{ \gamma_{m} \}_{m=1}^{M_{\mathrm{bs}}-1} $. 
Accordingly, the estimate of AoD can be given by
\begin{align}
\hat{\phi}_{q}^{[i]} = \arcsin \big( - \frac{\lambda_{c}}{2 \pi d} \angle z_{\phi}^{\star} \big). \label{equ:phi_closedform}
\end{align}

We next address the Doppler shift estimation problem in \eqref{equ:AoDandDoppler_Doppler}.
Due to the similar mathematical form, the solution of \eqref{equ:AoDandDoppler_Doppler} can be obtained in the same way as \eqref{equ:AoDandDoppler_AoD}.
Define the matrix 
$\mathbf{W}_{\mathbf{Q},q} = \mathbf{Q}^{*} \Big( \big\| \hat{\mathbf{b}}_{q}^{(2)}\big\|_{2}^{2} \mathbf{I}_{N} - \hat{\mathbf{b}}_{q}^{(2)} \big(\hat{\mathbf{b}}_{q}^{(2)}\big)^{H} \Big) \mathbf{Q}^{T}$, and the steering vector  $\mathbf{a}(z_{\nu}) = \big[z_{\nu}, z_{\nu}^2, \ldots, z_{\nu}^{T} \big]^T$ with the generator $\big\{ z_{\nu} = e^{\mathrm{j} 2 \pi \nu T_{\mathrm{sym}} } \big\}$,
we can obtain the estimation of the generator for Doppler shifts by replacing $\hat{\mathbf{d}}_{q}^{[i-1]}$, 
$\mathbf{P}$,
$\mathbf{W}_{\mathbf{P},q}$, 
and $\mathbf{a}(z_{\phi})$, 
with 
$\hat{\mathbf{b}}_{q}^{(2)}$, 
$\mathbf{Q}$,
$\mathbf{W}_{\mathbf{Q},q}$, 
and $\mathbf{a}(z_{\nu})$, 
respectively. 
Let $z_{\nu}^{\star}$ be the estimated generator, and the estimate of Doppler shift can be obtained by 
\begin{align}
\hat{\nu}_{q}^{[i]} = \frac{1}{2 \pi T_{\mathrm{sym}}} \angle z_{\nu}^{\star}. \label{equ:nu_closedform}
\end{align}

\begin{algorithm}[!tp]
	\renewcommand{\algorithmicrequire}{\textbf{Input:}}  
	\renewcommand{\algorithmicensure}{\textbf{Output:}}  
	\caption{Proposed Parameter Estimation Algorithm}
	\label{alg:ParaEst}
	\begin{algorithmic}[1]
		\Require
		The received echo signals $\mathbf{y}_{n,k}$, $\forall k \in \{1, 2, \ldots, K\}$, $\forall n \in \{1, 2, \ldots, N\}$, $\mathbf{P}$, 
		initial Doppler shift $\nu_{q}^{[0]} = 0$,
		the number of iterations $I_{\mathrm{iter}}$.
		\State Compute the factor matrices $\mathbf{B}^{(3)}$, $\mathbf{B}^{(2)}$, and $\mathbf{B}^{(1)}$ via \eqref{equ:bb3_hat}-\eqref{equ:bb1_hat}, respectively;
		
		\For {each $q \in [1,\ldots,Q]$}
		\State Estimate $\hat{\theta}_{q}$ via \eqref{equ:AoA_est}; \
		\State Estimate $\hat{\tau}_{q}$ via \eqref{equ:tau_est}; \
		\For {$i = 1,\ldots, I_{\mathrm{iter}}$}
		\State Estimate $\hat{\phi}_{q}^{[i]}$ and $\hat{\nu}_{q}^{[i]}$ via \eqref{equ:AoDandDoppler_AoD} and \eqref{equ:AoDandDoppler_Doppler}, respectively; \  
		\EndFor
		\State Return $\hat{\phi}_{q} = \phi_{q}^{[I_{\mathrm{iter}}]}$ and $\hat{\nu}_{q} = \nu_{q}^{[I_{\mathrm{iter}}]}$; \  
		\EndFor
		\State Estimate $\hat{\beta}_{q}$ via \eqref{equ:beta}; \  		
		\Ensure
		Return estimated parameters $\lbrace \hat{\theta}_{q}, \hat{\tau}_{q}, \hat{\phi}_{q}, \hat{\nu}_{q}, \hat{\beta}_{q} \rbrace_{q=1}^{Q}$.
	\end{algorithmic}
\end{algorithm}

Based on the above procedures, we alternately perform the estimation of $\hat{\phi}_{q}$ and $\hat{\nu}_{q}$ until the iterative update process converges to a stationary point of \eqref{equ:JointAoDandDopplerProblem}. 
After obtaining $\lbrace \hat{\nu}_{q} \rbrace_{q=1}^{Q}$, $\lbrace \hat{\phi}_{q} \rbrace_{q=1}^{Q}$, $\lbrace \hat{\theta}_{q} \rbrace_{q=1}^{Q}$, and $\lbrace \hat{\tau}_{q} \rbrace_{q=1}^{Q}$, we are able to estimate the reflection coefficients.
From the observation tensor model \eqref{equ:CPD Model}, taking the mode-3 unfolding of $\bm{\mathcal{Y}}$ and the transpose operation, we have
\begin{align}
	\mathrm{vec} \big(\mathbf{Y}_{(3)}^{T}\big) 
	& = \mathbf{A}_{\mathrm{td}} \odot  ( \mathbf{B}^{(2)} \odot \mathbf{B}^{(1)} ) \bm{\beta} + \mathrm{vec} \big(\mathbf{N}_{(3)}^{T}\big),
\end{align}
where $\bm{\beta} = [\beta_{1}, \ldots, \beta_{Q}]^T$ and
$\mathbf{a}_{\mathrm{td}}(\tau_{q})$ is the $q$-th column of $\mathbf{A}_{\mathrm{td}}$.
Following the least-squares criterion, the estimate of reflection coefficients can be obtained by 
\begin{align} 
	\hat{\bm{\beta}} = \big(\hat{\mathbf{A}}_{\mathrm{td}} \odot  ( \hat{\mathbf{B}}^{(2)} \odot \hat{\mathbf{B}}^{(1)} ) \big)^{\dagger} \mathrm{vec} \big(\mathbf{Y}_{(3)}^{T}\big), \label{equ:beta}
\end{align}
where $\hat{\mathbf{A}}_{\mathrm{td}}$, $\hat{\mathbf{B}}^{(2)}$, and $\hat{\mathbf{B}}^{(1)}$ are constructed based on the estimated parameters $\lbrace \hat{\tau}_{q} \rbrace_{q=1}^{Q}$,  $\lbrace \hat{\nu}_{q} \rbrace_{q=1}^{Q}$ and $\lbrace \hat{\phi}_{q} \rbrace_{q=1}^{Q}$, and  $\lbrace \hat{\theta}_{q} \rbrace_{q=1}^{Q}$, respectively.

To this end, the unknown parameters of the targets can be obtained at the BS side, which is summarized in Algorithm \ref{alg:ParaEst}.  
Meanwhile, the UE can estimate the five-tuple set of channel parameters $\lbrace \hat{\theta}_{p}, \hat{\tau}_{p}, \hat{\phi}_{p}, \hat{\nu}_{p}, \hat{\alpha}_{p} \rbrace_{p=1}^{P}$ under the same process.
With the estimated multipath channel parameters, one can reconstruct the entire channel matrix $\mathbf{\hat{H}}_{n,k}$ according to \eqref{equ:CommuChannel_FreqTime}.

\subsection{Discussions}
In this subsection, we will discuss the practical issues for implementing the proposed algorithm, including the realistic scenario with electromagnetic interference from non-target objects, unknown number of targets, pilot signals design, and the performance tradeoff between sensing and communication.

We first discuss a more realistic scenario, where there are electromagnetic interference from non-target objects or passive interference caused by the links between targets and UEs. 
From the perspective of receive signal processing, the echo signals comprising the interference can be expressed as
\begin{align}
	\mathbf{y}_{n,k} 
	& = \sum_{q=1}^{Q} \beta_{q} e^{-\mathrm{j} 2 \pi \tau_{q} f_{s} \frac{k}{K_{0}}} \mathbf{a}_{\mathrm{re}}(\theta_{q})  \mathbf{a}_{\mathrm{bs}}^{T}(\phi_{q}) \mathbf{p}_{n} e^{\mathrm{j} 2 \pi  n \nu_{q} T_{\mathrm{sym}} }  \nonumber\\
	& \quad + \underbrace{\sum_{\tilde{q}=1}^{\tilde{Q}} \beta_{\tilde{q}} e^{-\mathrm{j} 2 \pi \tau_{\tilde{q}} f_{s} \frac{k}{K_{0}}} \mathbf{a}_{\mathrm{re}}(\theta_{\tilde{q}})  \mathbf{a}_{\mathrm{bs}}^{T}(\phi_{\tilde{q}}) \mathbf{p}_{n} e^{\mathrm{j} 2 \pi  n \nu_{\tilde{q}} T_{\mathrm{sym}} }}_{\text{interference}} \nonumber\\
	& \quad\quad\quad\quad\quad + \mathbf{n}_{n,k}, 
\end{align}
where $\lbrace \beta_{\tilde{q}}, \theta_{\tilde{q}}, \phi_{\tilde{q}}, \tau_{\tilde{q}}, \nu_{\tilde{q}} \rbrace_{{\tilde{q}}=1}^{\tilde{Q}}$ denotes the unknown parameters of $\tilde{Q}$ non-target objects or the number of passive interference links. 
For this model, our proposed algorithm has two processing ways: 
i) treating the interference as the additional noise. In this case, the total noise power increases, the proposed algorithm will operate in lower signal-to-noise ratio (SNR) regime for estimating the true targets. 
ii) treating the interference as the virtual targets. In this case, the noise power remains unchanged, while the unknown parameters will incorporate both true targets and non-target objects, i.e., $\lbrace \beta_{q}, \theta_{q}, \phi_{q}, \tau_{q}, \nu_{q} \rbrace_{{\tilde{q}}=1}^{Q+\tilde{Q}}$.  
For the proposed algorithm, besides the true target, the estimated results will contain more objects and their corresponding parameters. Indeed, distinguishing and identifying these two types of targets and non-target objects remains a challenge that deserves for further investigation. We will leave this target classification and identification problem as a future research topic.

Remark 1: 
We assumed that the tensor rank, i.e., the number of targets $Q$, is known for the factor matrices estimation.  
If the knowledge of $Q$ is unavailable, one can employ more sophisticated tensor decomposition techniques or adopt a preliminary step to determine it. 
For the former, the basic idea is to exploit sparsity-promoting priors to find a low-rank representation of the echo signal tensor such that the tensor rank and the factor matrices can be jointly estimated \cite{Bazerque2013Rank,Cheng2022Towards}. 
For the latter, a preliminary step is adopted to determine the tensor rank based on certain criteria, such as the minimum description length \cite{Liu2016Detection,Yokota2017Robust}. In this way, the number of targets $Q$ can be estimated, and then the factor matrices estimation can be performed based on the estimated $Q$. 

Remark 2: 
For the proposed parameter estimation algorithm, the training pattern of pilot signals needs to have a random phase, ensuring that the training beams can measure the channels in an isotropic manner, instead of exhaustive directional beams. Accordingly, the required overhead is related to the number of targets or channel multipaths, rather than the number of transceiver antennas. On the other hand, the required channel estimation time for the proposed algorithm is proportional to the number of OFDM symbols employed for training. 
This is because the channel estimation can be carried out after collecting $N$ OFDM symbols in the time domain, enabling the received signals to be formulated as a third-order tensor. 

Remark 3: 
The integration of ISAC and massive MIMO is expected to design the sensing and communication functionalities in a unified manner, attempting to share the same spectrum resources and merge the dual functionalities into a single system. 
Note that the communication and sensing channels are usually different, especially the number of multipaths and the number of targets are not always equal to each other. 
This disparity will result in a fundamental performance tradeoff when exploiting the same time-frequency resources to estimate these two different channels. 
To be specific, when the number of multipaths of the communication channel is fewer than the number of targets, the proposed algorithm can attain better channel estimation performance. 
On the contrary, i.e., when there are fewer targets than multipaths, the proposed algorithm can achieve higher target parameter estimation accuracy.
On the other hand, in terms of energy optimization between these two tasks, it necessitates optimizing the signal energy based on the desired channel estimation or target parameter estimation performance. 
For instance, the signal energy should be optimized such that the task with weaker performance can satisfy the requirements of sensing or communication, then the performance of another task can also satisfy the requirements.

\section{Parameter Estimation With Beam Squint Effects}
In this section, we investigate the parameter estimation for the case with beam squint effects. In this case, the steering vectors in \eqref{equ:RadarChannel_FreqTime} become frequency-dependent and the sensing channel matrix can be accordingly expressed as 
\begin{equation} \label{equ:DW_SensingChannelModel_k_n}
	\mathbf{G}_{n,k} = \sum_{q=1}^{Q} \beta_{q} e^{-\mathrm{j} 2 \pi \tau_{q} f_{s} \frac{k}{K_{0}}} \mathbf{a}_{\mathrm{re},k}(\theta_{q})  \mathbf{a}_{\mathrm{bs},k}^{T}(\phi_{q}) e^{\mathrm{j} 2 \pi  n \nu_{q} T_{\mathrm{sym}} },
\end{equation}
where  $\mathbf{a}_{\mathrm{re},k}(\theta_{q}) = \mathbf{a}_{\mathrm{re}}(\theta_{q}, f)|_{f=k \Delta f}$, 
$\mathbf{a}_{\mathrm{bs},k}(\phi_{q}) = \mathbf{a}_{\mathrm{bs}}(\phi_{q}, f)|_{f=k \Delta f}$. 
Unfortunately, the observation data of \eqref{equ:ReceivedSignalMatrix} with multiple subcarriers cannot be independently decomposed into multiple dimensions, and thus does not conform to the tensor model. 
In the following, we propose a CPD tensor formulation and the corresponding parameter estimation algorithm by exploiting signals of multiple time slots.

\subsection{CPD Formulation With Beam Squint Effects}

To enable the CPD formulation of the received signals in the presence of beam squint effects, there is a need to design a new training pattern for facilitating the parameters estimation. 
To this end, the segment-based training precoder is designed as follows  
\begin{align} 
\mathbf{P} & = \big[ \mathbf{p}_{1}, \mathbf{p}_{2}, \ldots, \mathbf{p}_{N} \big] \nonumber \\
& = \big[ \underbrace{\overline{\mathbf{p}}_{1}, \ldots, \overline{\mathbf{p}}_{N_{d}}}_{(1)} \Big|
\underbrace{\overline{\mathbf{p}}_{1}, \ldots, \overline{\mathbf{p}}_{N_{d}}}_{(2)} \Big| 
\cdots \Big|
\underbrace{\overline{\mathbf{p}}_{1}, \ldots, \overline{\mathbf{p}}_{N_{d}}}_{(L)} \big]. \label{equ:training_BeamSquint}
\end{align}

As can be seen, the whole channel training frame in the time domain is divided into $L$ segments, each containing $N_{d}$ OFDM symbols such that $N_{d} L = N$.
In other words, the training precoder is updated $N_{d}$ times within one time slot and repeated $L$ times per frame. 
At the receiver, after removing the pilot symbol and concatenating the echo signals of $N_d$ training OFDM symbols, the echo signals of the $l$-th segment can be given by
\begin{align}
	\overline{\mathbf{Y}}_{k,l} 
	& =\! \sum_{q=1}^{Q} \beta_{q} e^{-\mathrm{j} 2 \pi \tau_{q} f_{s} \frac{k}{K_{0}}} \mathbf{a}_{\mathrm{re},k}(\theta_{q})  \mathbf{a}_{\mathrm{bs},k}^{T}(\phi_{q})  
	\overline{\mathbf{P}} \overline{\mathbf{\Gamma}}_{l}(\nu_{q}) + \overline{\mathbf{N}}_{k,l} \nonumber\\
	& =\! \sum_{q=1}^{Q} \overline{\beta}_{q,k}  \mathbf{a}_{\mathrm{re},k}(\theta_{q})  \mathbf{a}_{\mathrm{bs},k}^{T}(\phi_{q})  
	\overline{\mathbf{P}} \overline{\mathbf{\Gamma}}_{l}(\nu_{q}) + \overline{\mathbf{N}}_{k,l}  \label{equ:ReceivedSignalMatrix_DW}
\end{align}
where 
$\mathbf{\overline{Y}}_{k,l} \in \mathbb{C}^{ M_{\mathrm{re}} \times N_{d}}$, 
$\overline{\mathbf{P}} = [\overline{\mathbf{p}}_{1}, \ldots, \overline{\mathbf{p}}_{N_{d}}] \in \mathbb{C}^{ M_{\mathrm{bs}} \times N_{d}}$, 
$\overline{\mathbf{\Gamma}}_{l}(\nu_{q}) \in \mathbb{C}^{N_{d} \times N_{d} }$ is a diagonal matrix with the $n_{d}$-th diagonal element being $e^{\mathrm{j} 2 \pi ((l-1) N_{d}+n_{d}) \nu_{q} T_{\mathrm{sym}} }$, 
$\mathbf{\overline{N}}_{k,l}$ is the corresponding noise matrix, and the second equality holds by defining 
$\overline{\beta}_{q,k} = \beta_{q} e^{-\mathrm{j} 2 \pi \tau_{q} f_{s} \frac{k}{K_{0}}}$.

Since the moving speed of the target is far less than the speed of light \cite{zhang2019multibeam}, the phase induced by Doppler effects can be regarded as quasi-static and approximately unchanged within consecutive $N_{d}$ OFDM symbols,\footnote{The assumption is generally satisfied in typical ISAC application scenarios. For instance, in a wideband massive MIMO system with a carrier frequency of $f_{c} = 28$ GHz, $K_{0} = 128$ subcarriers, and $f_{s} = 1$ GHz, the OFDM symbol period $T_{\mathrm{sym}} = 0.16$ $\mu$s (assuming $T_{\mathrm{cp}} = K_{0}/(4f_{s})$). Considering a vehicle target moving at a relative speed at 120 km/h, the round-trip Doppler shift is $\nu_q = 6.22$ kHz. In this case, $\nu_q T_{\mathrm{sym}} < 0.001 \ll 1$, and the phase induced by the Doppler shift can be regarded as static over multiple OFDM symbols.} and we have the approximation
$e^{\mathrm{j} 2 \pi \nu_{q} t } \approx \mathrm{const}$  when $t \in  \big[(l-1) N_{d} T_{\mathrm{sym}}, l N_{d} T_{\mathrm{sym}} \big]$, $l = 1, 2, \ldots, L$.
Consequently, the echo signals in \eqref{equ:ReceivedSignalMatrix_DW} can be expressed as 
\begin{align} 
	\mathbf{\overline{Y}}_{k,l} 
	& \!=\! \sum_{q=1}^{Q} \overline{\beta}_{q,k} e^{\mathrm{j} 2 \pi  l N_{d} \nu_{q} T_{\mathrm{sym}} } \mathbf{a}_{\mathrm{re},k}(\theta_{q})  \mathbf{\overline{b}}_{\mathrm{bs},k}^{T}(\phi_{q}) \!+\! \widetilde{\mathbf{N}}_{k,l}, \label{equ:ReceivedSignalMatrix_N1_OFDM}
\end{align}
where 
$\mathbf{\overline{b}}_{\mathrm{bs},k}(\phi_{q}) = \mathbf{\overline{P}}^{T} \mathbf{a}_{\mathrm{bs},k}(\phi_{q}) \in \mathbb{C}^{ N_{d} \times 1}$, 
$\widetilde{\mathbf{N}}_{k,l}$ contains the propogation noise and the Doppler approximation noise. 
By collecting $\mathbf{\overline{Y}}_{k,l} $ of multiple time slots $\big\{\mathbf{\overline{Y}}_{k,l} \big\}_{l=1}^{L}$, we can formulate the received echo signals as a tensor $\bm{\overline{\mathcal{Y}}}_{k} \in \mathbb{C}^{M_\mathrm{re} \times N_{d} \times L}$ fitting the CPD format as
\begin{align}
\bm{\overline{\mathcal{Y}}}_{k} 
& = \sum_{q=1}^{Q}  \mathbf{a}_{\mathrm{re},k}(\theta_{q}) \circ  \mathbf{\overline{b}}_{\mathrm{bs},k}^{T}(\phi_{q}) \circ
\big(\overline{\beta}_{q,k} \mathbf{a}_{\mathrm{ds}}(\nu_{q}) \big) + \bm{\widetilde{\mathcal{N}}}_{k},  \label{equ:tensor_BeamSquint}
\end{align}
where $\mathbf{a}_{\mathrm{ds}}(\nu_{q}) = \big[e^{\mathrm{j} 2 \pi \nu_{q} N_{d} T_{\mathrm{sym}} }, \ldots, e^{\mathrm{j} 2 \pi \nu_{q} L N_{d}  T_{\mathrm{sym}} } \big] \in \mathbb{C}^{L \times 1}$. 
At the $k$-th subcarrier, 
the three factor matrices for the tensor $\bm{\overline{\mathcal{Y}}}_{k}$ are given by
\begin{align}
	\mathbf{C}_{k}^{(1)} & \triangleq \big[ \mathbf{a}_{\mathrm{re},k}(\theta_{1}), \ldots, \mathbf{a}_{\mathrm{re},k}(\theta_{Q}) \big] 
	\in \mathbb{C}^{M_\mathrm{re} \times Q},  \label{equ:DW_factor_matrix_1} \\
	\mathbf{C}_{k}^{(2)} & \triangleq \big[ \mathbf{\overline{b}}_{\mathrm{bs},k}(\phi_{1}), \ldots, \mathbf{\overline{b}}_{\mathrm{bs},k}(\phi_{Q}) \big]
	\in \mathbb{C}^{N_{d} \times Q}, \label{equ:DW_factor_matrix_2} \\
	\mathbf{C}_{k}^{(3)} & \triangleq \big[ \overline{\beta}_{1,k} \mathbf{a}_{\mathrm{ds}}(\nu_{1}), \ldots, \overline{\beta}_{Q,k} \mathbf{a}_{\mathrm{ds}}(\nu_{Q}) \big] 
	\in \mathbb{C}^{L \times Q}.  \label{equ:DW_factor_matrix_3}
\end{align}

\subsection{Uniqueness Condition}
In the observation model \eqref{equ:tensor_BeamSquint}, we find that  $\mathbf{a}_{\mathrm{ds}}(\nu_{q})$ can be regarded as a steering vector with respect to the Doppler shift and thus the factor matrix $\mathbf{C}_{k}^{(3)}$ exhibits a Vandermonde structure, whose generators are $\big\{ z_{\nu,q} = e^{\mathrm{j} 2 \pi \nu_{q} N_{d} T_{\mathrm{sym}} } \big\}_{q=1}^{Q}$. 
With the structural information, we follow Lemma \ref{lemma:Lemma2_Vander} and obtain the following uniqueness condition of CPD in the case of beam squints.

\begin{proposition} \label{prop:Uniqueness_BeamSquint}
	Let $\mathbf{C}_{k}^{(1)} \in \mathbb{C}^{M_\mathrm{re} \times Q}$, 
	$\mathbf{C}_{k}^{(2)} \in \mathbb{C}^{N_{d} \times Q}$, 
	$\mathbf{C}_{k}^{(3)} \in \mathbb{C}^{L \times Q}$ be the factor matrices of $\bm{\mathcal{X}}_{k} \in \mathbb{C}^{M_\mathrm{re} \times N_{d} \times L}$,  
	where $\mathbf{C}_{k}^{(3)}$ exhibits the Vandermonde nature with distinct generators.  
	If 
	\begin{align}
		\begin{cases}
			\mathrm{rank} \big(\mathbf{C}_{k}^{(K_3-1, 3)} \odot \mathbf{C}_{k}^{(2)} \big) = Q, \\
			\mathrm{rank} \big(\mathbf{C}_{k}^{(L_3, 3)} \odot \mathbf{C}_{k}^{(1)} \big) = Q, \label{equ:Uniqueness_BeamSquint}
		\end{cases}
	\end{align}
	with $K_{3} + L_{3} = L + 1$, the CPD of $\bm{\mathcal{X}}_{k}$ is said to be unique. 
	In the generic case, \eqref{equ:Uniqueness_BeamSquint} becomes
	\begin{align}
		\min \big( N_{d}(K_{3}-1),   M_\mathrm{re} L_{3} \big) \geq Q. \label{equ:DW_theoUniqueness}
	\end{align}
\end{proposition}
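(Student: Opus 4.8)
The plan is to establish Proposition \ref{prop:Uniqueness_BeamSquint} as a direct specialization of Lemma \ref{lemma:Lemma2_Vander}, since the tensor $\bm{\mathcal{X}}_{k}$ arising from \eqref{equ:tensor_BeamSquint} shares exactly the structural template assumed there: a third-order CPD whose third factor is Vandermonde while the first two factors are arbitrary. First I would make the identification $\mathbf{B}^{(1)} \leftrightarrow \mathbf{C}_{k}^{(1)}$, $\mathbf{B}^{(2)} \leftrightarrow \mathbf{C}_{k}^{(2)}$, and $\mathbf{B}^{(3)} \leftrightarrow \mathbf{C}_{k}^{(3)}$, so that the three mode dimensions become $I_{1} = M_\mathrm{re}$, $I_{2} = N_{d}$, and $I_{3} = L$. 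The Vandermonde hypothesis required by Lemma \ref{lemma:Lemma2_Vander} is supplied by \eqref{equ:DW_factor_matrix_3}: each column of $\mathbf{C}_{k}^{(3)}$ is a scaled power sequence in the generator $z_{\nu,q} = e^{\mathrm{j} 2 \pi \nu_{q} N_{d} T_{\mathrm{sym}} }$, and the common per-column scaling $\overline{\beta}_{q,k}$ leaves the shift-invariant structure intact, since the ratio of consecutive entries of $\overline{\beta}_{q,k} \mathbf{a}_{\mathrm{ds}}(\nu_{q})$ is still $z_{\nu,q}$ irrespective of the scaling.

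Next I would verify that the generators $\{ z_{\nu,q} \}_{q=1}^{Q}$ are distinct, which is precisely the nondegeneracy premise of Lemma \ref{lemma:Lemma2_Vander}. Distinctness holds whenever the Doppler phases $\nu_{q} N_{d} T_{\mathrm{sym}}$ are pairwise inequivalent modulo one, i.e., whenever the Doppler shifts lie in an unambiguous interval determined by the effective slot spacing $N_{d} T_{\mathrm{sym}}$; under the quasi-static Doppler regime already invoked for \eqref{equ:ReceivedSignalMatrix_N1_OFDM} this is comfortably satisfied. With distinct generators in hand, the deterministic sufficient conditions \eqref{equ:Uniqueness_BeamSquint} become mere transcriptions of \eqref{equ:Lemma2_Vander} under the above identification, with the partition $K_{3} + L_{3} = L + 1$ replacing $K_{3} + L_{3} = I_{3} + 1$. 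Invoking Lemma \ref{lemma:Lemma2_Vander} then yields the uniqueness of the CPD of $\bm{\mathcal{X}}_{k}$.

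For the generic statement \eqref{equ:DW_theoUniqueness}, I would argue via the row dimensions of the two Khatri-Rao products. The submatrix $\mathbf{C}_{k}^{(K_3-1, 3)}$ is $(K_{3}-1) \times Q$, so $\mathbf{C}_{k}^{(K_3-1, 3)} \odot \mathbf{C}_{k}^{(2)}$ has $N_{d}(K_{3}-1)$ rows and generically attains full column rank $Q$ if and only if $N_{d}(K_{3}-1) \geq Q$. Likewise $\mathbf{C}_{k}^{(L_3, 3)} \odot \mathbf{C}_{k}^{(1)}$ has $M_\mathrm{re} L_{3}$ rows and generically has rank $Q$ iff $M_\mathrm{re} L_{3} \geq Q$. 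Imposing both simultaneously gives $\min\big( N_{d}(K_{3}-1), M_\mathrm{re} L_{3} \big) \geq Q$, which is exactly \eqref{equ:DW_theoUniqueness}.

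The hard part will not be the algebra but the justification that the single common scalar $\overline{\beta}_{q,k}$ multiplying each Doppler column does not violate the hypotheses of Lemma \ref{lemma:Lemma2_Vander}. One must confirm that a nonzero diagonal rescaling of the Vandermonde factor preserves both the distinctness of the generators and the rank conditions \eqref{equ:Uniqueness_BeamSquint}, which is legitimate because a CPD is determined only up to the standard per-component scaling ambiguity, so that scalar can be freely absorbed into the loadings of the other two factors without altering identifiability. Once this absorption argument is stated cleanly, the remainder follows verbatim from the mode-1 unfolding argument underlying Lemma \ref{lemma:Lemma2_Vander}.
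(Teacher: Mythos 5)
Your proposal is correct and takes essentially the same route as the paper, whose proof of Proposition \ref{prop:Uniqueness_BeamSquint} is precisely the substitution $\mathbf{B}^{(\kappa)} \leftrightarrow \mathbf{C}_{k}^{(\kappa)}$ into Lemma \ref{lemma:Lemma2_Vander}. Your additional care in absorbing the per-column scaling $\overline{\beta}_{q,k}$ into the other factors (so the third factor is genuinely Vandermonde) is a detail the paper leaves implicit, but it does not change the argument.
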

\begin{proof}
	The proof follows that of Lemma \ref{lemma:Lemma2_Vander} by replacing $\mathbf{B}^{(1)}$, $\mathbf{B}^{(2)}$, and $\mathbf{B}^{(3)}$, with 
	$\mathbf{C}_{k}^{(1)}$, $\mathbf{C}_{k}^{(2)}$, and $\mathbf{C}_{k}^{(3)}$, respectively.
\end{proof}

Proposition \ref{prop:Uniqueness_BeamSquint} provides a sufficient condition to determine the uniqueness of CPD in the case of beam squint effects. 
The uniqueness condition holds true for arbitrary subcarrier $k$ in the frequency domain. 
This indicates that the factor matrices can be recovered using individual subcarriers, and thus avoiding the potential performance degradation caused by the beam squint effects.

\subsection{Estimation of Unkown Parameters}

\begin{algorithm}[!tp]
	\renewcommand{\algorithmicrequire}{\textbf{Input:}}  
	\renewcommand{\algorithmicensure}{\textbf{Output:}}  
	\caption{Proposed Parameter Estimation Algorithm With Beam Squint Effects}
	\label{alg:ParaEst_DW}
	\begin{algorithmic}[1]
		\Require
		The received echo signals $\mathbf{\overline{Y}}_{k,l}$, $\forall l \in \{1, 2, \ldots, L\}$, $\mathbf{P}$.
		
		\For {each $k \in [1,\ldots,K]$}
		\State Compute the factor matrices $\hat{\mathbf{C}}_{k}^{(1)}$, $\hat{\mathbf{C}}_{k}^{(2)}$, and $\hat{\mathbf{C}}_{k}^{(3)}$;
		\For {each $q \in [1,\ldots,Q]$}
		\State Estimate $\hat{\theta}_{q}$ via \eqref{equ:AoA_est_BeamSquint}; \
		\State Estimate $\hat{\phi}_{q}$ via \eqref{equ:AoD_est_BeamSquint}; \
		\State Estimate $\hat{\nu}_{q}$ via \eqref{equ:nu_est_BeamSquint}; \
		\EndFor
		\State Estimate $\hat{\overline{\bm{\beta}}}_{k}$ via \eqref{equ:beta_BeamSquint}; \
		\EndFor
		\State Estimate $\hat{\tau}_{q}$ and $\hat{\beta}_{q}$ via \eqref{equ:tau_est_beamSquint} and \eqref{equ:beta_est_beamSquint}, respectively; \

		\Ensure
		Return estimated parameters $\lbrace \hat{\theta}_{q}, \hat{\tau}_{q}, \hat{\phi}_{q}, \hat{\nu}_{q}, \hat{\beta}_{q} \rbrace_{q=1}^{Q}$.
	\end{algorithmic}
\end{algorithm}

Since the unknown parameters in \eqref{equ:tensor_BeamSquint} are nonlinearly coupled, we can employ the similar operations in \eqref{equ:Mode1Unfolding}-\eqref{equ:bb1_hat} to obtain the estimation of the factor matrices, i.e., $\hat{\mathbf{C}}_{k}^{(1)}$, $\hat{\mathbf{C}}_{k}^{(2)}$, and $\hat{\mathbf{C}}_{k}^{(3)}$. 
After the tensor factorization of \eqref{equ:tensor_BeamSquint}, we turn to
extract the unknown parameters utilizing the estimated factor matrices.
Note that the AoA information of targets is characterized by each column of $\mathbf{C}_{k}^{(1)}$. 
Hence, the AoA can be estimated as
\begin{align} \label{equ:AoA_est_BeamSquint}
	\hat{\theta}_{q} 
	= \arg\max_{\theta} \big| \big(\hat{\mathbf{c}}_{q,k}^{(1)}\big)^{H} \mathbf{a}_{\mathrm{re},k}(\theta) \big|^{2},
\end{align}
where $\hat{\mathbf{c}}_{q,k}^{(1)}$ is the $q$-th column of the estimated factor matrix $\hat{\mathbf{C}}_{k}^{(1)}$. 
Similarly, the AoD $\phi$ can be obtained by
\begin{align} 
	\hat{\phi}_{q} = \arg\max_{\phi} \frac{\big| (\hat{\mathbf{c}}_{q,k}^{(2)})^{H} \mathbf{\overline{P}}^{T} \mathbf{a}_{\mathrm{bs},k}(\phi) \big|^{2} }{\big\| \hat{\mathbf{c}}_{q,k}^{(2)} \big\|_{2}^{2} \big\|\mathbf{\overline{P}}^{T} \mathbf{a}_{\mathrm{bs},k}(\phi)\big\|_{2}^{2}}, \label{equ:AoD_est_BeamSquint}
\end{align}
where $\hat{\mathbf{c}}_{q,k}^{(2)}$ is the $q$-th column of $\hat{\mathbf{C}}_{k}^{(2)}$. 
We now discuss how to estimate the Doppler shift $\nu_{q}$ from the estimated factor matrix $\hat{\mathbf{C}}_{k}^{(3)}$. 
Note that since $\hat{\mathbf{C}}_{k}^{(3)}$ exhibits the Vandermonde nature, the Doppler shift can be directly estimated via
\begin{align}
	\hat{\nu}_{q} = \frac{1}{2 \pi T_{\mathrm{sym}}} \angle \hat{z}_{\nu,q,k}, \label{equ:nu_est_BeamSquint}
\end{align}
where $\hat{z}_{\nu,q,k}$ is the estimated generator along the $q$-th column of $\hat{\mathbf{C}}_{k}^{(3)}$.
By constructing $\hat{\mathbf{A}}_{\mathrm{ds}}$, $\hat{\mathbf{C}}_{k}^{(2)}$, and $\hat{\mathbf{C}}_{k}^{(1)}$ based on the estimated parameters $\lbrace \hat{\nu}_{q} \rbrace_{q=1}^{Q}$, $\lbrace \hat{\phi}_{q} \rbrace_{q=1}^{Q}$, and $\lbrace \hat{\theta}_{q} \rbrace_{q=1}^{Q}$, respectively, we can obtain 
\begin{align} 
	\hat{\overline{\bm{\beta}}}_{k} = \big(\hat{\mathbf{A}}_{\mathrm{ds}} \odot  ( \hat{\mathbf{C}}_{k}^{(2)} \odot \hat{\mathbf{C}}_{k}^{(1)} ) \big)^{\dagger} \mathrm{vec} \big(\mathbf{Y}_{(3),k}^{T}\big), \label{equ:beta_BeamSquint}
\end{align}
where $\mathbf{Y}_{(3),k}$ is the mode-3 unfolding of $\bm{\overline{\mathcal{Y}}}_{k}$, the coefficients $\hat{\overline{\bm{\beta}}}_{k}$ is the estimate of $\overline{\bm{\beta}}_{k} = [\overline{\beta}_{1,k}, \ldots, \overline{\beta}_{Q,k}]^T$.

To this end, the explicit estimation of reflection coefficients and time delays has not been obtained. 
To address this issue, it is necessary to combine the coefficients of multiple subcarriers. 
From \eqref{equ:ReceivedSignalMatrix_DW}, we have
\begin{align}
\Big[\overline{\bm{\beta}}_{1}, \ldots, \overline{\bm{\beta}}_{K}\Big] =  \mathrm{diag}(\beta_{1}, \beta_{2}, \ldots, \beta_{Q}) \mathbf{A}_{\mathrm{td}}^{T},
\end{align}
where the $q$-th column of $\mathbf{A}_{\mathrm{td}}$ is constructed from $\mathbf{a}_{\mathrm{td}}(\tau_{q})$.
With the estimated $\hat{\overline{\bm{\beta}}}_{k}$ of $K$ subcarriers, we can estimate the time delay as
\begin{align} \label{equ:tau_est_beamSquint}
\hat{\tau}_{q} & = \arg\max_{\tau_{q}} \frac{\Big|\hat{\overline{\bm{\beta}}}_{q}^{H} \mathbf{a}_{\mathrm{td}}(\tau_{q}) \Big|^{2} }{\left\| \hat{\overline{\bm{\beta}}}_{q} \right\|_{2}^{2} \left\|\mathbf{a}_{\mathrm{td}}(\tau_{q})\right\|_{2}^{2}},
\end{align}
where $\hat{\overline{\bm{\beta}}}_{q}$ extracts the $q$-th row elements of the matrix  $\Big[\hat{\overline{\bm{\beta}}}_{1}, \ldots, \hat{\overline{\bm{\beta}}}_{K}\Big] \in \mathbb{C}^{Q \times K}$. Then, the reflection coefficient is obtained by
\begin{align} \label{equ:beta_est_beamSquint}
\hat{\beta}_{q} = 
e^{\mathrm{j} 2 \pi \hat{\tau}_{q} f_{s} \frac{k}{K_{0}}}
\hat{\overline{\beta}}_{q,k}.
\end{align}

We summarize the proposed parameter estimation method in Algorithm \ref{alg:ParaEst_DW} for the case of beam squint effects. 
Meanwhile, the UE can estimate the multipath channel parameters $\lbrace \hat{\theta}_{p}, \hat{\tau}_{p}, \hat{\phi}_{p}, \hat{\nu}_{p}, \hat{\alpha}_{p} \rbrace_{p=1}^{P}$ under the same process and reconstruct the entire channel matrix.{\footnote{The channel training structure in \eqref{equ:training_BeamSquint} and the Doppler approximation in \eqref{equ:ReceivedSignalMatrix_N1_OFDM} should be known in advance at the UE side for facilitating the tensor-based formulation and estimation.}}

\section{Simulation Results}
\label{Sec:Simulation}

In this section, we conduct the numerical simulation to investigate the performance of the proposed unified channel and target parameter estimation schemes.  
We consider a BS equipped with $M_{\mathrm{bs}} = 64$ transmit antennas, $M_{\mathrm{re}} = 8$ antennas, and $M_{\mathrm{ue}} = 8$ antennas at the UE. The carrier frequency is $f_{c} = 28$ GHz and the bandwidth is $f_{s} = 100$ MHz with $K_{0}=128$.  
The elements of training precoder $\mathbf{P}$ are uniformly drawn from a unit circle \cite{zhou2017low}. 
The CP duration in one OFDM symbol is set to $T_{\mathrm{cp}} = \frac{1}{2\Delta f}$ and the time delay is uniformly generated between $0$ and $T_{\mathrm{cp}}$.
The BS is detecting $Q = 4$ targets and the channel experiences $L = 4$ multipaths, where the reflection coefficient and path gain are generated according to $\mathcal{CN}(0,1)$.
The AoAs and AoDs for either targets or channels are randomly drawn from $[-\pi/3, \pi/3]$. 
The velocity of each target $V_{q}$ is uniformly distributed in 
$[-V_{\mathrm{max}}, V_{\mathrm{max}}]$ with the maximum velocity $V_{\mathrm{max}} = 30$ m/s, and the maximum Doppler shift of the channel is $\nu_{\max} = 2.8$ kHz.
The root mean square error (RMSE) is adopted for examining the target parameters estimation performance \cite{Zhang2020Joint}
\begin{equation}
\mathrm{RMSE}(\chi) = \sqrt{\frac{1}{Q} \sum_{q=1}^{Q} \big(\hat{\chi}_{q} - \chi_{q} \big)^2 },
\end{equation}
where $\hat{\chi}_{q}$ and $\chi_{q}$ denote the estimated and the corresponding true parameters, respectively, $\chi_{q} \in \lbrace \theta_{q}, \phi_{q}, \tau_{q}, \nu_{q} \rbrace$. The normalized MSE (NMSE) at the $n$-th OFDM symbol is calculated as 
\begin{equation}
\mathrm{NMSE}(n) = \frac{\sum_{k=1}^{K} \big\lVert \mathbf{H}_{n,k} - \mathbf{\hat{H}}_{n,k} \big\rVert_{F}^{2}}{\sum_{k=1}^{K} \big\lVert \mathbf{H}_{n,k} \big\rVert_{F}^{2}},
\end{equation}
and we adopt $\mathrm{NMSE} = \mathrm{NMSE}(N)$ to measure the channel estimation performance.

\begin{figure*}[htbp]
	\subfigure[]{
		\begin{minipage}[t]{0.4\linewidth}
			\centering
			\includegraphics[width=3.3in]{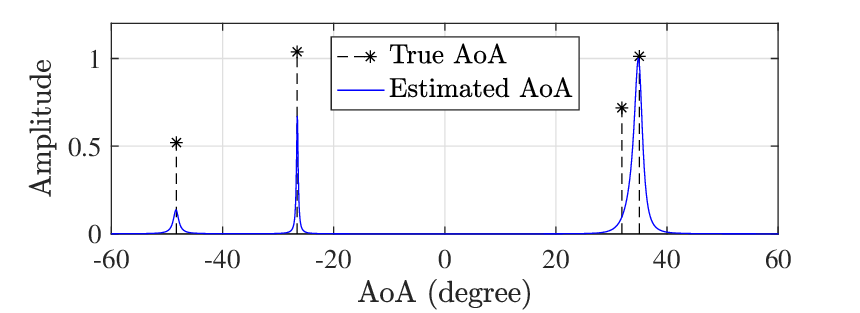}
		\end{minipage}%
	}%
	\subfigure[]{
		\begin{minipage}[t]{0.7\linewidth}
			\centering
			\includegraphics[width=3.3in]{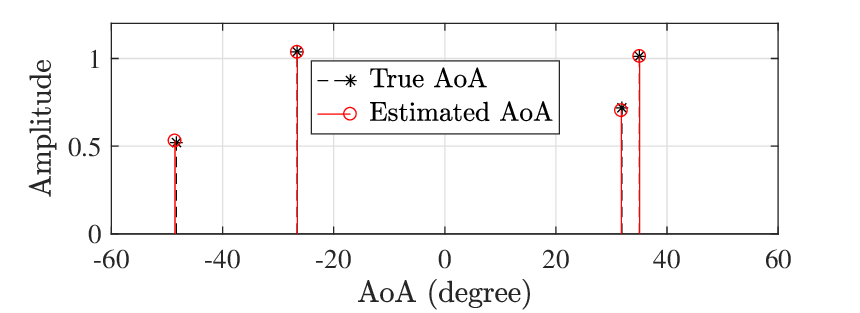}
		\end{minipage}%
	}%
	\quad
	\subfigure[]{
	\begin{minipage}[t]{0.4\linewidth}
		\centering
		\includegraphics[width=3.0in]{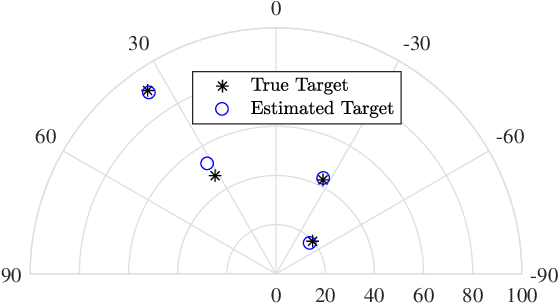}
	\end{minipage}%
    }%
    \subfigure[]{
	\begin{minipage}[t]{0.7\linewidth}
		\centering
		\includegraphics[width=3.0in]{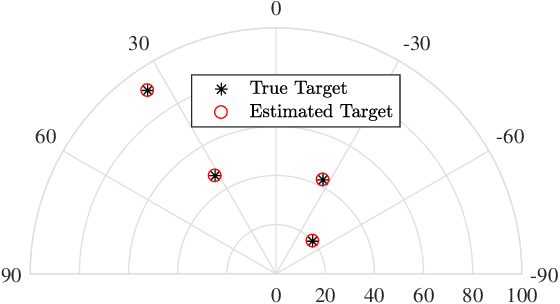}
	\end{minipage}%
	}%
    \quad 
	\subfigure[]{
	\begin{minipage}[t]{0.4\linewidth}
		\centering
		\includegraphics[width=3.3in]{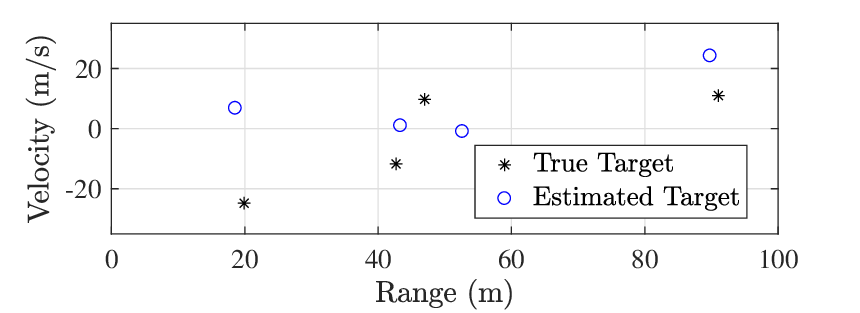}
	\end{minipage}%
	}%
	\subfigure[]{
	\begin{minipage}[t]{0.7\linewidth}
		\centering
		\includegraphics[width=3.3in]{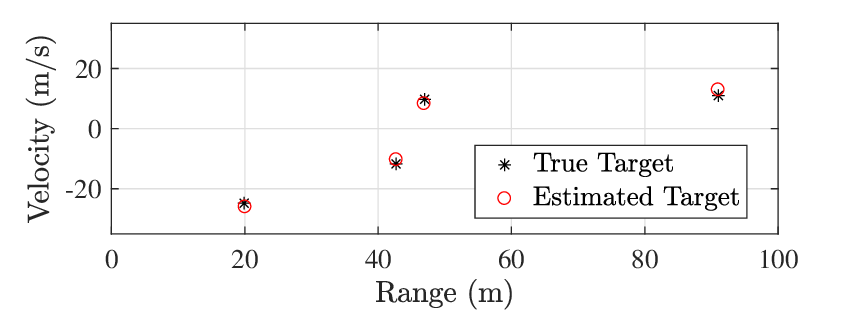}
	\end{minipage}%
	}%
	\centering
	\caption{Target parameter estimation performance for the benchmark algorithm (left column) and the proposed Algorithm 1 (right column), where SNR $=10$ dB, $N = 16$, $K = 16$.  
	The first row shows the AoA estimation; 
	The second row shows the AoD and range estimation (polar axis); 
	The third row shows the range and velocity estimation (x-axis range, y-axis velocity).
	}
\end{figure*}

We investigate the target parameter estimation performance of the proposed Algorithm 1 in Fig. 2, where the SNR
$\left\|\mathbf{Y}_{k} - \mathbf{N}_{k} \right\|_{F}^2 / \left\|\mathbf{N}_{k}\right\|_{F}^2$ is set to $=10$ dB, $N = 16$, $K = 16$, $I_{\mathrm{iter}} = 30$. 
The benchmark algorithm in \cite{liu2020joint} is taken into consideration, where the AoA/AoD and range/velocity are estimated via the MUSIC and matched-filtering (MF) methods, respectively.  
For the proposed algorithm, it can be observed that all the target paramaters (angles, ranges, and velocities) can be accurately estimated only using few time-frequency training resources.
In contrast, only part of the target parameters can be accurately recovered by the benchmark algorithm. 
For instance, from Fig. 2(a), the two adjacent targets located at azimuth angles $31.87^{\circ}$ and $35.01^{\circ}$ fail to be identified by the MUSIC method, which is due to the limited aperture of the receive antenna array. 
From Fig. 2(e), although the range parameters can be effectively estimated, there is large error for target velocity estimation. 
This is because the number of training OFDM symbols used for target sensing is fewer than that of transmit antennas in massive MIMO-ISAC, resulting in short observation duration and low Doppler resolution.

\begin{figure}[!tp]
	\centering
	\includegraphics[width=3.2in]
	{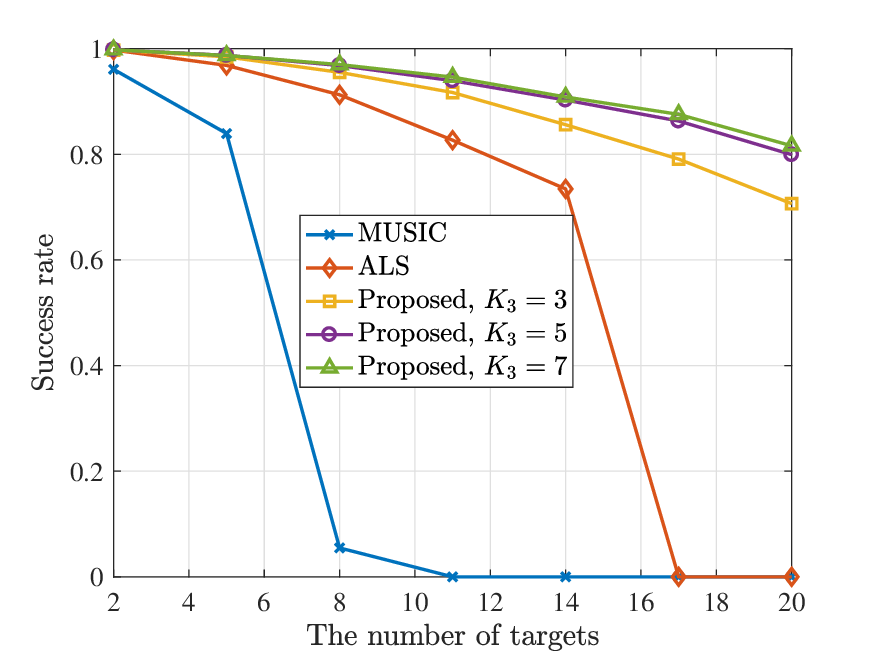}
	\caption{The success rate of correctly distinguishing the targets.}
	\label{Fig_SuccessRate_NumTargets}
\end{figure}

Fig. \ref{Fig_SuccessRate_NumTargets} depicts the success rate of correctly distinguishing the targets, where the ALS-based method in \cite{zhou2017low,park2019spatial} and MUSIC method are taken into comparison. 
A trial is considered successful if the AoA estimation error satisfies $|\sin \theta_{q} - \sin \hat{\theta}_{q} | \leq \frac{1}{2 M_{\mathrm{re}}} $ \cite{Mamandipoor2016Newtonized}.
The simulation parameters are set to SNR $=15$ dB, $N = 16$, $K = 16$. 
From Fig. \ref{Fig_SuccessRate_NumTargets}, it can be observed that the success rate of the proposed algorithm is considerably higher than that of benchmark methods.  
The success rate of the MUSIC method decreases to zero when the number of targets to be detected exceeds 11. 
The ALS-based method can detect no more than 17 targets, while the success rate of the proposed algorithm is still more than $79.1 \%$.
The significant enhancement of the success rate is attributed to exploiting the inherent Vandermonde structure of factor matrices in \eqref{equ:Mode1Unfolding}-\eqref{equ:bb1_hat}, which experimentally verifies the analysis in Lemma \ref{lemma:Lemma2_Vander}.

\begin{figure*}[htbp]
	\centering 
	\subfigure[AoD estimation.]{
		\begin{minipage}[t]{0.4\linewidth}
			\centering
			\includegraphics[width=3.3in]{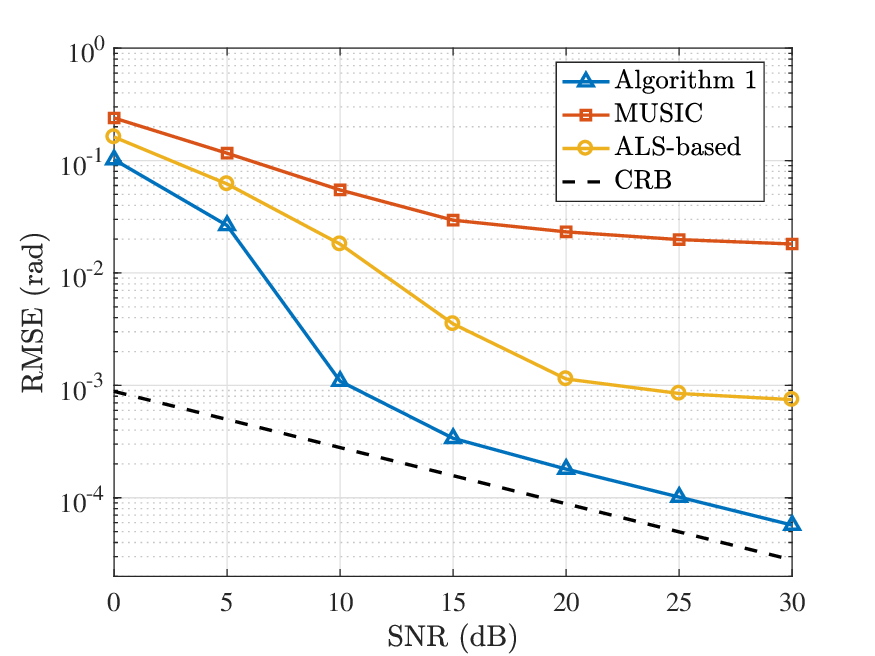}
			\label{Fig_VaryingSNR_AoD}
		\end{minipage}%
	}%
	\subfigure[AoA estimation.]{
		\begin{minipage}[t]{0.7\linewidth}
			\centering
			\includegraphics[width=3.3in]{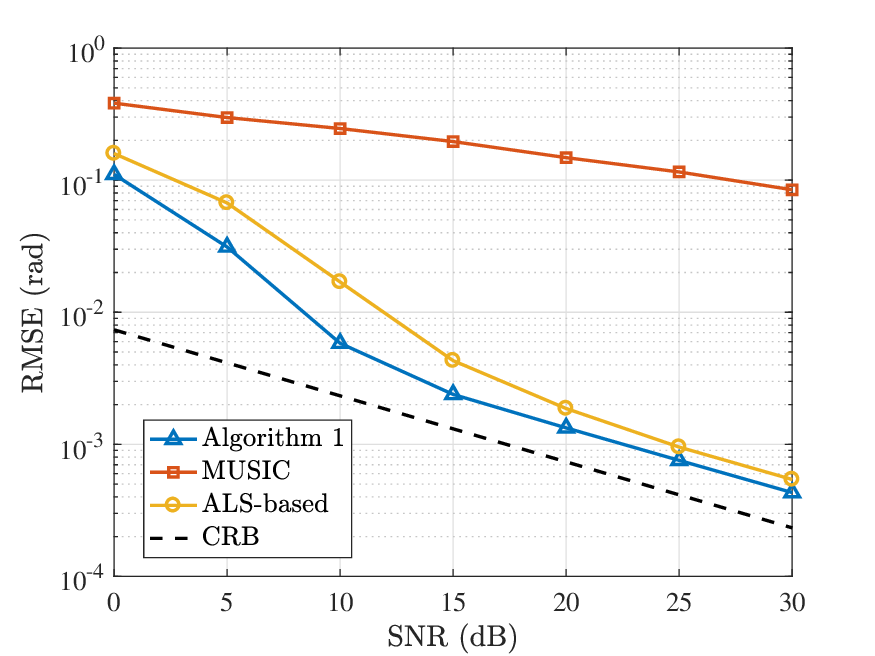}
			\label{Fig_VaryingSNR_AoA}
		\end{minipage}%
	}%
	\centering
	\caption{The RMSE performance of the AoD and AoA estimation versus SNR.}
\end{figure*}

Fig. 4 depicts the RMSE performance of the AoD and AoA estimation versus SNR for the proposed algorithm, which is compared with the ALS-based method, the MUSIC method, and the Cramér-Rao bound (CRB) derived in \cite{zhou2017low}.
The estimation results are averaged over $95\%$ of the realizations to ignore the effect of outliers \cite{Myers2019Message}.
As can be observed from Fig. 4, the AoD estimation performance of the proposed Algorithm 1 becomes better as SNR increases. 
The ALS-based method and MUSIC method encounter the RMSE bottlenecks as the SNR increases, since they fail to address the coupling problem between the AoD and Doppler shift. 
Also, the proposed algorithm achieves better AoA estimation performance than the conventional ALS-based method. 
The improvement is attributed to the tensor formulation of the target parameter estimation in \eqref{equ:TargetParameterEstimation}, where the Vandermonde structure of the factor matrix is exploited.
In contrast, due to the limited training symbols, the MUSIC method fails to return a satisfactory performance. 
Besides, from Fig. 4, we can observe that the AoD and AoA estimation accuracy attained by the proposed Algorithm 1 gradually approaches their corresponding CRBs, especially in the high SNR regime. 
These results validate the superior performance of the proposed algorithm in terms of angle estimation.

\begin{figure*}[htbp]
	\subfigure[Range estimation.]{
		\begin{minipage}[t]{0.4\linewidth}
			\centering
			\includegraphics[width=3.3in]{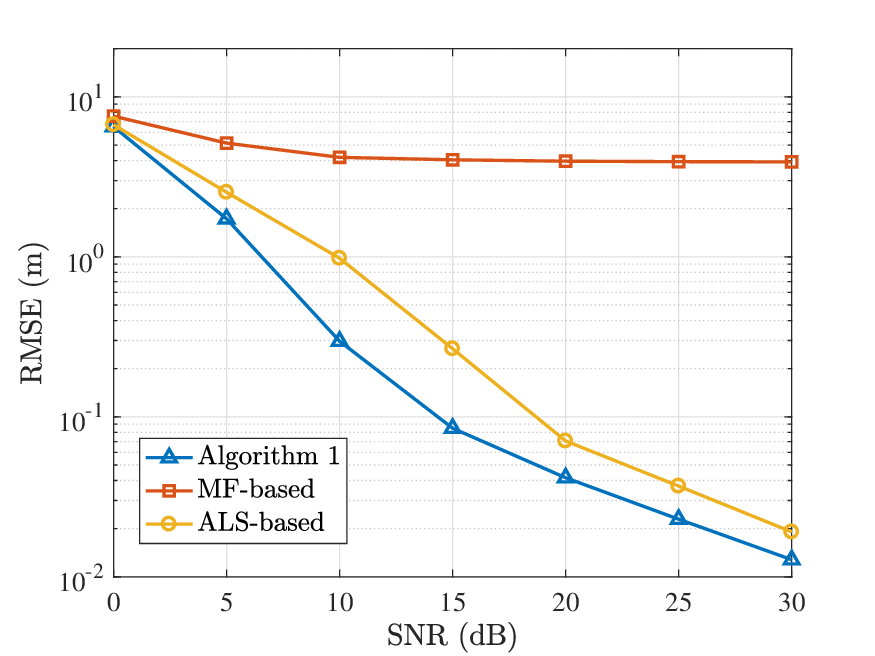}
			\label{Fig_VaryingSNR_Range}
		\end{minipage}%
	}%
	\subfigure[Velocity estimation.]{
		\begin{minipage}[t]{0.7\linewidth}
			\centering
			\includegraphics[width=3.3in]{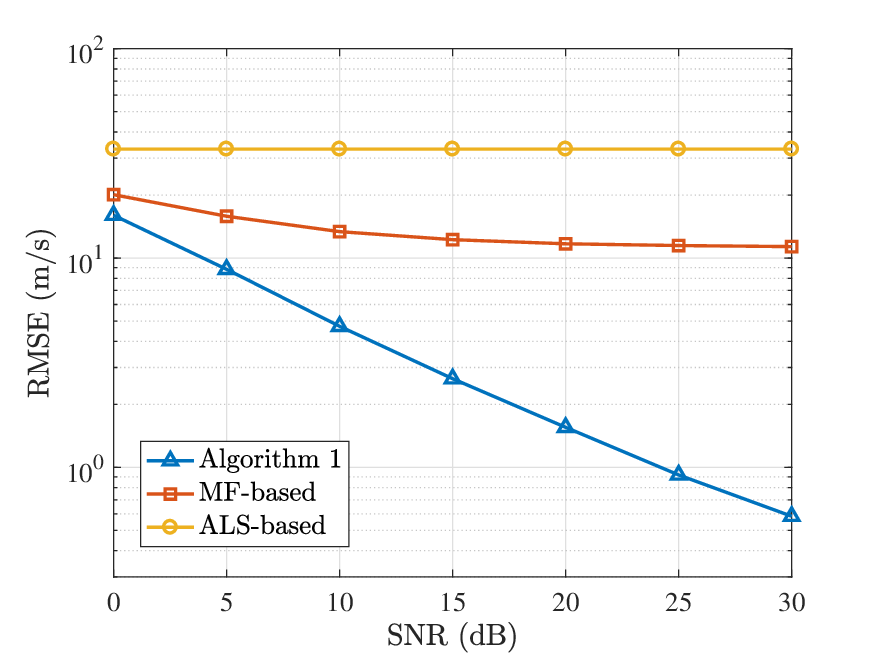}
			\label{Fig_VaryingSNR_Velocity}
		\end{minipage}%
	}%
	\centering
	\caption{The RMSE performance of the range and velocity estimation versus SNR.
	}
\end{figure*}

Fig. 5 plots the RMSE performance of the range and velocity estimation as a function of SNR under the same simulation setting. 
Compared to the conventional MF-based method, the range estimation performance for both the proposed algorithm and the ALS-based method gradually improves as SNR increases. 
The more accurate range estimation is because the time delay caused by multiple targets can be independently estimated from the factor matrices in tensor-based methods. 
In Fig. \ref{Fig_VaryingSNR_Velocity}, the conventional ALS-based method yields the worst RMSE performance of the velocity estimation, since it cannot recover the coupled Doppler shift and angle information from the factor matrix $\hat{\mathbf{B}}^{(2)}$.  
The MF-based method achieves improved velocity estimation performance but gradually approaches a floor accuracy. 
The proposed Algorithm 1 can achieve the refined velocity estimation performance via the alternative optimation and the Doppler compensation. 
Note that the RMSE of range and velocity estimation is relatively higher than that of angle estimation. The reason is that the range, velocity, and angle parameters correspond to the resources in the frequency, time, and space domains, respectively, and the occupied resources for these parameters are different.

\begin{figure}[!tp]
	\centering
	\includegraphics[width=3.2in]
	{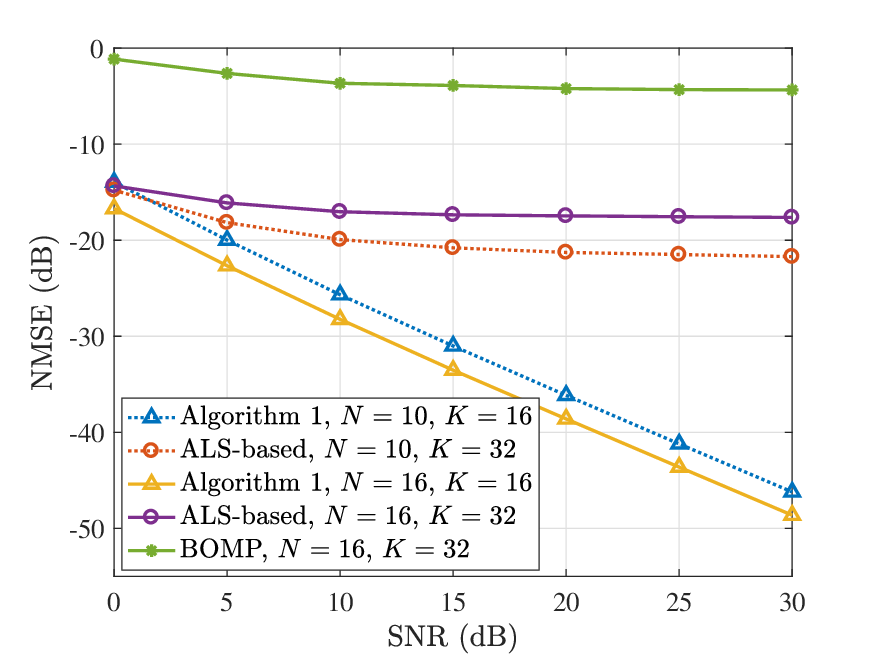}
	\caption{The NMSE performance of channel estimation as a function of SNR.}
	\label{Fig_VaryingSNR_H}
\end{figure}

Fig. \ref{Fig_VaryingSNR_H} shows the NMSE performance of the channel estimation as a function of SNR for the proposed algorithm, where the ALS-based method and the block orthogonal matching pursuit (BOMP) algorithm in \cite{qin2018time} are taken into account.   
The number of training subcarriers for the proposed algorithm is $K = 16$, while the benchmark algorithms employ $K=32$ training subcarriers. 
It can be observed that the proposed algorithm can achieve superior performance with less training overhead.  
Due to the ignorance of Doppler effects, the ALS-based method encounters a performance bottleneck as the SNR increases.
Although more training symbols are used, the ALS-based method cannot return the improved channel estimation performance since more phase distortion is accumulated in the presence of the Doppler shifts. 
On the contrary, the proposed algorithm yields the improved estimation performance as number of training symbols increases. 
This is attributed to the joint estimation of Doppler shift and AoD in \eqref{equ:JointAoDandDopplerProblem}, where the path gain estimation can be improved by the effective Doppler compensation.
Note that the BOMP algorithm does not attain satisfactory channel estimation performance in this scenario. 
The possible reason is that the number of training subcarriers and symbols is small, whereas the number of unknown nonzero elements in high-dimensional massive MIMO channels is exceedingly large. 
This incurs the high correlation between columns of the dictionary matrix, resulting in the failure of channel support set reconstruction.

\renewcommand\arraystretch{1.4}
\begin{table}[!tp]
	\centering
	\caption{Comparison of CPU runtime (s) using different algorithms for channel and target parameters estimation} 
	\begin{tabular}{|c|c|c|c|c|c|}
		\hline
		Algorithms & BOMP & MUSIC & MF & ALS & Proposed \\
		\hline
		Factor matrix & $\diagdown$ & $\diagdown$ & $\diagdown$ & 0.1915 & 0.0037 \\
		\hline
		AoA & $\diagdown$ & 0.0049 & $\diagdown$ & 0.0027 & 0.0027 \\
		\hline
		AoD & $\diagdown$ & 0.0153 & $\diagdown$ & 0.0058 & \multirow{2}{*}{0.7742}  \\
		\cline{1-1} \cline{2-2} \cline{3-3} \cline{4-4}  \cline{5-5}
		Velocity & $\diagdown$ & $\diagdown$ & 0.0024 & $\diagdown$ &  \\
		\hline
		Range & $\diagdown$ & $\diagdown$ & 0.0007 & 0.0067 & 0.0068 \\
		\hline
		Channel & 0.2366 & $\diagdown$ & $\diagdown$ & 0.2068 & 0.5986 \\
		\hline
	\end{tabular}
	\label{Table:CPUruningtime}
\end{table}

\begin{figure*}[htbp]
	\subfigure[AoD estimation.]{
		\begin{minipage}[t]{0.4\linewidth}
			\centering
			\includegraphics[width=3.3in]{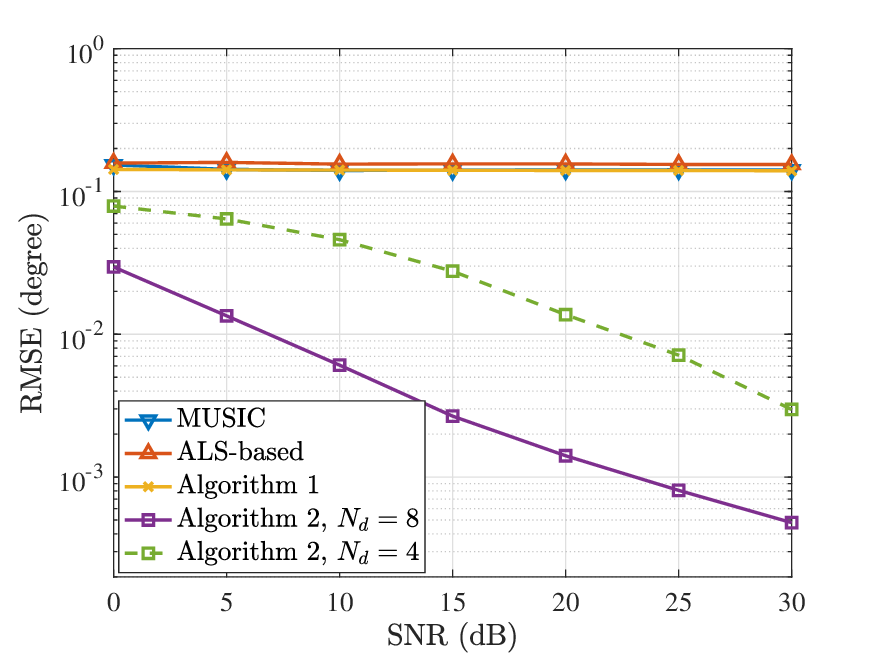}
			\label{Fig_DW_VaryingSNR_AoD}
		\end{minipage}%
	}%
	\subfigure[AoA estimation.]{
		\begin{minipage}[t]{0.7\linewidth}
			\centering
			\includegraphics[width=3.3in]{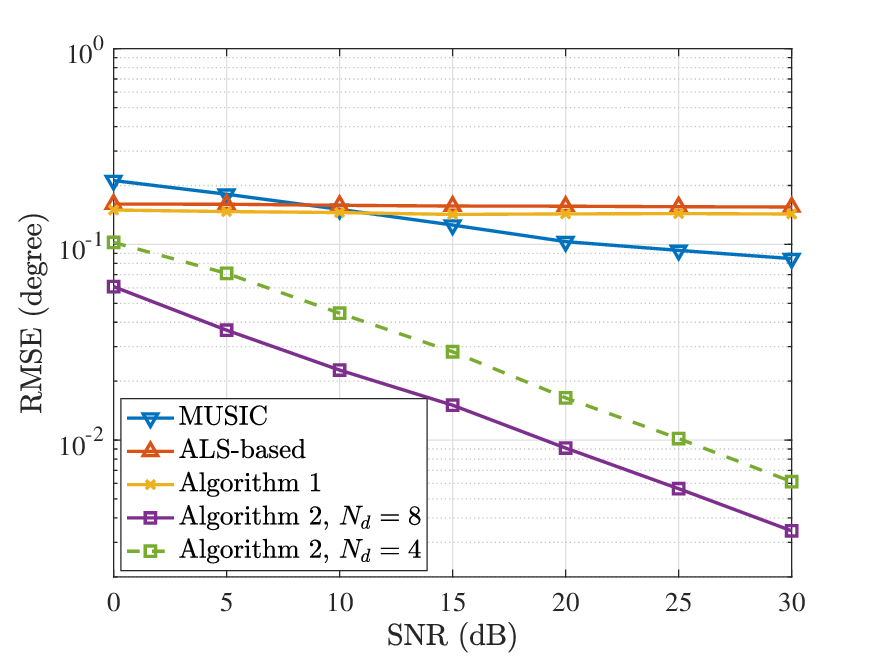}
			\label{Fig_DW_VaryingSNR_AoA}
		\end{minipage}%
	}%
	\centering
	\caption{The RMSE performance of the AoD and AoA estimation versus SNR with significant beam squint effects.
	}
\end{figure*}

Next, we compare the CPU runtime of the proposed algorithm with benchmark algorithms for channel and target parameters estimation in Table \ref{Table:CPUruningtime}.
Compared to the MUSIC and MF methods, the proposed algorithm maintains competitive computational complexity in terms of AoA and range estimation, but has much higher complexity in terms of AoD and velocity estimation. 
Also, against to the ALS-based method, the proposed algorithm shows substantially lower computational complexity for factor matrix estimation and similar computational complexity for AoA and range estimation. 
However, for AoD and velocity estimation, the proposed algorithm exhibits much higher CPU runtime. 
This is because the proposed algorithm requires the iterative estimation process for addressing the coupling between AoD and Doppler shift parameters, resulting in the higher CPU runtime. 
In terms of channel estimation, although the CPU runtime of the proposed algorithm is more than double that of the BOMP and ALS-based method, this leads to a significant improvement in estimation performance, as shown in Fig. \ref{Fig_VaryingSNR_H}. 
Thus, when compared to the benchmark algorithms, the computational complexity of the proposed algorithm is competitive for AoA and range estimation of targets, though it is higher for channel estimation and the joint AoD and velocity estimation of targets.

We next investigate the parameter and channel estimation performance of the proposed algorithm in the case of significant beam squint effects. 
In this case, we set $f_{s} = 1$ GHz, $T_{\mathrm{cp}} = 1/ \Delta f$, $K = K_{0} = 128$, $N = 64$, 
$N_{d} = 8$,  $L = 8$, $K_3 = 5$.
Since the parameters are estimated using individual subcarriers for the proposed Algorithm 2, we select the minimum RMSE and NMSE among the training subcarriers as the evaluation metric.

Fig. 7 illustrates the RMSE performance of angle parameters as a function of SNR, where the MUSIC method, the ALS-based method, and the proposed Algorithm 1 are compared.
From Fig. \ref{Fig_DW_VaryingSNR_AoD}, we can see that the proposed Algorithm 2 can attain the improved AoD estimation performance as SNR gradually increases, while the conventional methods as well as Algorithm 1 cannot offer satisfactory RMSE performance due to the interference caused by beam squint effects.
A similar tendency can be observed for the AoA estimation in Fig. \ref{Fig_DW_VaryingSNR_AoA}. 
This highlights the advantage of the proposed segment-based training pattern, such that the parameter estimation in the whole bandwidth can be decoupled and performed via the individual subcarriers.

\begin{figure}[!tp]
	\centering
	\includegraphics[width=3.2in]
	{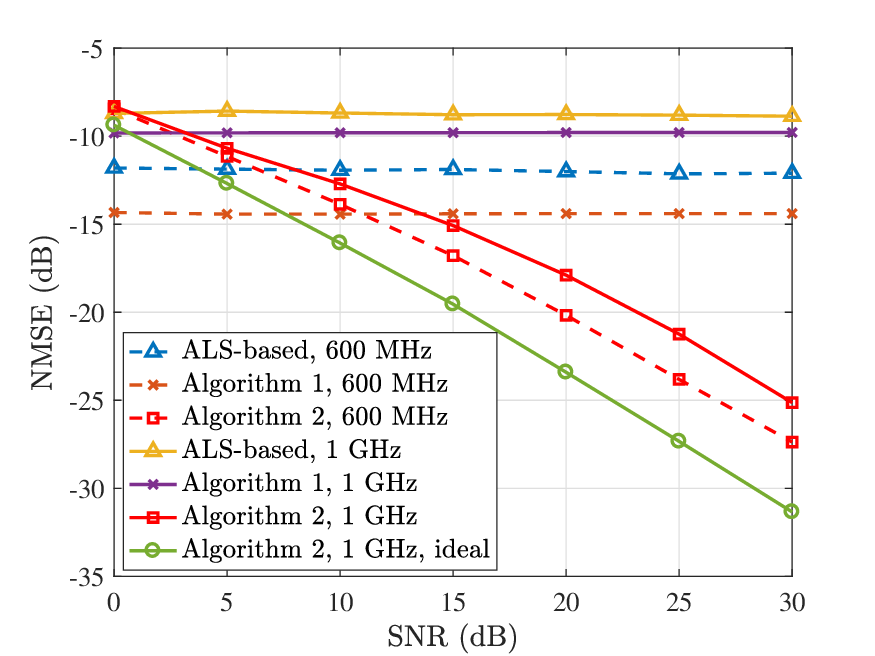}
	\caption{The NMSE performance of channel estimation as a function of SNR with different bandwidths, illustrating the significant beam squint effects.}
	\label{Fig_DW_VaryingSNR_H}
\end{figure}

Fig. \ref{Fig_DW_VaryingSNR_H} depicts the NMSE performance of the channel estimation as a function of SNR, where the system bandwidth is set to $600$ MHz and $1$ GHz, respectively.  
As can be observed, the proposed Algorithm 2 yields the better channel estimation accuracy and gradually improves as the SNR increases. 
Meanwhile, the ideal case without the Doppler approximation error is also considered. In this case, there exists a reasonable performance gap between the proposed Algorithm 2 under the ideal case and the Doppler approximation errors.
In contrast, the performance of the conventional ALS-based method and Algorithm 1 hits the NMSE floor as the SNR increases, since there is serious interference caused by beam squint effects. 
When the bandwidth decreases to $600$ MHz, we can see that the NMSE performance of benchmark schemes is improved, which is due to the reduction of the beam squint effects.

\section{Conclusion}
In this paper, a unified tensor framework for channel and target parameter estimation was developed for massive MIMO-ISAC systems, considering the cases with negligible and severe beam squint effects, respectively. 
By parameterizing the high-dimensional communication channel into a small number of physical parameters, we established a strong connection between estimating the wireless channel and the target parameters from the perspective of angular, delay, and Doppler dimensions. 
We proposed to exploit the same time-frequency resources for simultaneously channel and target sensing, whose parameters estimation can be both formulated as two structured tensor decomposition problems, complying with a unified CPD model. 
Based on this model, we analyzed the uniqueness condition of the formulated tensor decomposition problem and leveraged the Vandermonde structure of the factor matrices to enable the enhanced estimation of channel and target parameters including AoA, AoD, time delay, Doppler shift, and coefficients. 
Numerical results verified our theoretical analysis and the superiority of the proposed unified channel and target parameter estimation approaches in terms of the number of resolvable targets, sensing resolution, training overhead reduction, and channel estimation accuracy.

\vspace{1em}



%

\bibliographystyle{IEEEtran}

\end{document}